\newcommand{\B}{\mathcal{B}}
\newcommand{\D}{\mathcal{D}}
\newcommand{\A}{\mathcal{A}}
\newcommand{\K}{\mathcal{K}}
\renewcommand{\H}{\mathcal{H}}
\newcommand{\N}{\mathbb{N}}
\newcommand{\R}{\mathbb{R}}
\newcommand{\C}{\mathbb{C}}
\newcommand{\unit}{\mathbf{1}}
\newcommand{\ra}{\rightarrow}
\newcommand{\tr}{\operatorname{tr }}
\newcommand{\Ad}{\operatorname{Ad }}
\newcommand{\rmd}{\operatorname{d\!}}
\newcommand{\rmi}{\operatorname{i}}
\newcommand{\rme}{\operatorname{e}}
\renewcommand{\d}{\delta}
\newcommand{\eps}{\varepsilon}
\newcommand{\id}{\operatorname{id}}
\def\d{\mathrm{d}}
\newtheorem{theorem}{Theorem}[section]
\newtheorem{corollary}[theorem]{Corollary}
\theoremstyle{definition}\newtheorem{definition}[theorem]{Definition}
\theoremstyle{definition}\newtheorem{option}{Option}
\theoremstyle{definition}\newtheorem{remark}[theorem]{Remark}
\theoremstyle{definition}
\theoremstyle{definition}
\theoremstyle{definition}
\title{Control of Quantum Noise: on the role of dilations}
\author{Daniel Burgarth}
\address[DB]{Center for Engineered Quantum Systems, Department of Physics and Astronomy, Macquarie University, Sydney, 2109 NSW, Australia}
\author{Paolo Facchi}
\address[PF]{Dipartimento di Fisica and MECENAS, Università di Bari, I-70126 Bari, Italy, and INFN, Sezione di Bari, I-70126 Bari, Italy}
\author[Robin Hillier]{Robin Hillier$^*$}
\address[RH]{Department of Mathematics and Statistics, Lancaster University, Lancaster LA1 4YF, UK}
\email{r.hillier@lancaster.ac.uk}
\thanks{$^*$ corresponding author}
\date{25/7/2022}
\begin{document}
\begin{abstract}
We show that every finite-dimensional quantum system with Markovian (i.e., GKLS-generated) time evolution has an autonomous unitary dilation which can be dynamically decoupled. Since there is also always an autonomous unitary dilation which cannot be dynamically decoupled, this highlights the role of dilations in the control of quantum noise. We construct our dilation via a time-dependent version of Stinespring in combination with Howland's clock Hamiltonian and certain point-localised states, which may be regarded as a C*-algebraic analogue of improper bra-ket position eigenstates and which are hence of independent mathematical and physical interest.  
\end{abstract}
\maketitle

\section{Introduction}
The rich structure of quantum noise has inspired generations of mathematicians, engineers  and physicists alike~\cite{Accardi,Alicki, Attal1, Attal2,Attal3, Bengtsson, Breuer, Davies, Exnerbook, Gardiner, Gough, Jacobs, LB13, Nurdin, Rivas, Wal, Weiss, Wiseman}. Roughly speaking one can distinguish between a microscopic and a macroscopic approach: while the former considers all degrees of freedom, the latter looks at effective averaged dynamics. The bridge between these two approaches is formed by a rich variety of exact and approximate reduction methods (`micro to macro': Born-Markov approximation, Weak and Strong Coupling Limits, Partial Traces and Conditional Expectations, Projection Operator Techniques, Path Integral Approaches, Hierarchical Equations of Motion) and by exact and approximate extension  methods (`macro to micro': Quantum Stochastic Calculus, Gregoratti-Chebotarev Hamiltonian, Collision Models). 

The key motivation of the reduction approach is easy to understand. On a foundational and mathematical level it helps us to understand how decay, decoherence and the thermodynamical laws arise.  From an engineering perspective, one obtains models which are much easier to solve. The motivation to extend is, however, more subtle. Why would one try to find complicated extensions  (henceforth referred to as `dilations') of simple models which give the same dynamics anyway? And why would one not simply take the \emph{actual} physical model which is found in the lab?

One motivation is foundational and mathematical: to understand the mathematical structure obtained in the reduced model, one inverts the process to find dilations. This is a beautiful realisation theory, in which one aims to characterise non-uniqueness and how qualitative features of the reduced theory are reflected in their dilations. Dilations help understanding if the reduction approach has led to physically acceptable models, as exemplified by Stinespring's theory of complete positivity~\cite{Stinespring55,Paulsen2003}. The second, more practical motivation comes from engineering and control: since the true actual physical dilation is often unknown, the mathematical dilation can act as a \emph{surrogate} model for the physics, and opens new pathways of interfacing quantum noise. This approach has led to the important concepts of quantum filtering and feedback control, which interfaces not only with the macroscopic but with the microscopic description \cite{Wiseman,Nurdin}. In quantum optics such surrogate dilations appear to mimic the true physics very well, and their applications play a key role in the emerging field of quantum engineering, where they can help to fight quantum noise (the key obstacle to quantum computation).
 
The motivation of the present work is to understand how good such an  approach is \emph{beyond} the specific field of quantum optics. For instance, some of the leading quantum technology is based on superconducting physics, in which much richer noise models come to play~\cite{White}. Specifically, we consider one of the key control methods for noisy systems---dynamical decoupling. This open loop control is robust, as it requires very little knowledge of the true model and works for a large class of models. We defer the specifics of dynamical decoupling until later, but emphasise that its robustness makes it a good candidate for benchmarking  the response quantum noise gives to control through the underlying microscopic picture~\cite{AHFB15,Burgarth19,GN, Pollock, WisemanM}.
 
Since till date quantum stochastic calculus is the only known exact and generic dilation giving rise to a time-independent Hamiltonian (Gregoratti-Chebotarev~\cite{GoughG}), we first introduce a new, second dilation method, which allows us to compare the two approaches. This method is based on a combination of a time-dependent Stinespring theory~\cite{DMB} and a clock system extension~\cite{How}. The clock system makes use of point-localised quantum states which are modelled as singular states on bounded operators on a Hilbert space  or C*-subalgebras. They may be regarded as an algebraic relative of improper eigenstates of position and momentum operators arising in the Dirac bra-ket language. However,  unlike the latter, they are normalised and thus are valid quantum-mechanical states with a clear probabilistic interpretation. Since we could not find a treatment in literature, we decided to develop them here from scratch, which makes this part of independent physical and mathematical interest.
 
Then in Theorem \ref{th:decoupling} and Corollary \ref{cor2}, we compare the control performance of dynamical decoupling on these two dilations. For Gregoratti-Chebotarev, Gough and Nurdin~\cite{GN} have already shown recently that dynamical decoupling does not work. We will show that for our dilation, the opposite is true. This shows that the surrogate approach to control of noisy systems depends strongly on the underlying microscopic physics. Since the latter is typically unknown, it appears difficult to predict whether dynamical decoupling will work for a given macroscopic model. Both types of dilations are idealisations, and the physical truth is generally not fully captured by either of them but they offer at least good approximations and rigorous mathematical frameworks. It is however worth noting that dynamical decoupling has been successfully used in numerous solid systems and NMR-based experimental realisations of quantum computer systems, cf.~\cite{NISQ22} (in particular pages 27-28 and references there) for a recent survey article, which indicates that the Gregoratti-Chebotarev 
dilation might be an unrealistic description for those physical systems.

Our paper is organised as follows. In Section~\ref{sec:phys} we provide a qualitative and rather conceptual explanation of our dilation for which dynamical decoupling works.  We illustrate this with a simple single-qubit amplitude damping model. This section is in particular intended for readers who do not want to deal with the detailed mathematical aspects of this work. Section~\ref{sec:singular} provides a number of constructions of point-localised states and discusses their mathematical properties. These are used in the main Section~\ref{sec:DD}, where we recall the basics of dynamical decoupling and dilations, and then construct the aforementioned dilation for which we prove that dynamical decoupling works. We conclude with a brief outlook.

\section{Physical summary and single-qubit illustration}\label{sec:phys}

Commonly, in open quantum systems, one starts with a time-independent Hamiltonian on a large system, and finds effective noisy dynamics on a subsystem. Intuitively it is clear that the inverse of this procedure, albeit highly non-unique, should also be possible. This is what is meant by a \emph{dilation}. Here, we construct such dilations by a series of steps summarised in Figure~\ref{scheme}. 
\begin{figure}[tbp!]
\includegraphics[height=0.6\textwidth]{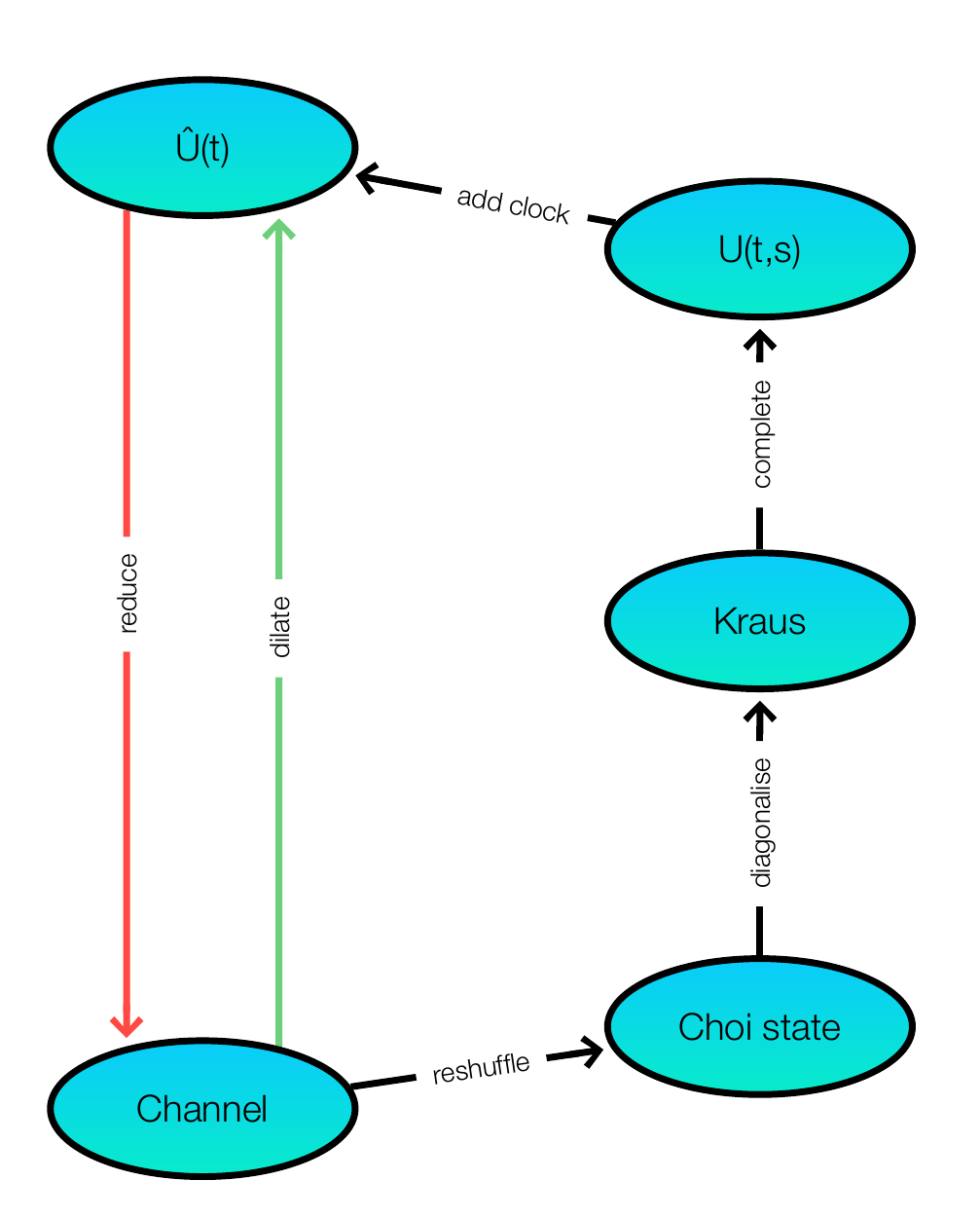}
\caption{\label{scheme} Summary of the scheme.  For a time-dependent channel such as $\rme^{t \mathcal{L}}$, we `dilate' (green arrow) by constructing a unitary group $\hat{U}(t)=\rme^{-\rmi t\hat{H}}$ and an initial state on a larger system such that the partial trace (red arrow) over the extended dynamics recovers the channel perfectly. This dilation is constructed via a number of steps (black arrows), extending~\cite{DMB} by the last step of adding a clock.}
\end{figure}

It is best to start with an example. Consider the \emph{single-qubit amplitude damping} Lindbladian on a two-level system $\C^2$, 
cf.~\cite{AHFB15},
\begin{equation}\label{eq:ad}
	\mathcal{L}(\rho)=\sigma_-\rho\sigma_+-\frac{1}{2}
(\sigma_+\sigma_-\rho+\rho\sigma_+\sigma_-),
\end{equation} 
where $\sigma_+ = \begin{pmatrix} 0 & 1 \\ 0 & 0 \end{pmatrix}$ and $\sigma_- = \sigma_+^*$ are Pauli raising/lowering operators acting on the qubit density matrix $\rho$  ($\rho \geq 0$, $\tr \rho = 1$). This gives rise to a channel, for $t\geq 0$,
\begin{equation*}
	\rme^{t\mathcal{L}}
	\begin{pmatrix}
\rho_{00} & \rho_{01} \\
\rho_{10} & \rho_{11} 
\end{pmatrix}
=\begin{pmatrix}
\rme^{-t} \rho_{00}  & \rme^{-\frac{t}{2}} \rho_{01} \\
\rme^{-\frac{t}{2}} \rho_{10} &  (1-\rme^{-t}) \rho_{00} + \rho_{11}
\end{pmatrix},
\end{equation*} 
which corresponds to a purely exponential decay to the pure density matrix $\rho_\infty=|1\rangle\langle 1|$.
It is easy to see that such a decay can also come from a flip-flop interaction with a second qubit initialised in the pure density matrix $\mu =|1\rangle\langle 1|$: this interaction preserves excitations, so the initial density matrix $\rho=|1\rangle\langle 1|$ is a steady state; while the initial density matrix $|0\rangle\langle 0|$ decays into  $|1\rangle\langle 1|$ initially, corresponding to a transfer of the flip to the second system. Of course, if we want to mimic decay, we need the coupling between the two qubits to eventually go down to zero. More precisely, the following choice on $\C^2\otimes \C^2$ does the job~\cite{AHFB15}:
\begin{equation}\label{eq:H12}
	H_{12}(t)= g(t) H_{1\leftrightarrow 2}, \qquad H_{1\leftrightarrow 2} = \sigma_+\otimes\sigma_-+\sigma_-\otimes \sigma_+,
	\qquad g(t) = \frac{1}{2\sqrt{e^{|t|}-1}}, \quad t\neq0.	
\end{equation}
This time-dependence does not only ensure that the coupling goes to zero in a way that the excitation never returns to the first qubit; it also becomes strong enough for short times such that the initial decay is already exponential. For $t\rightarrow 0$ the Hamiltonian diverges, but in an integrable way, so that the time evolution operator $U(t)=\exp\bigl(- \rmi  G(t) H_{1\leftrightarrow 2}\bigr)$, with $G(t)=\int_0^t g(s) \d s$, is well defined and continuous for all times, with $U(0)=1$. 

In the present manuscript, we want to extend this Hamiltonian further, to a larger bath space, such that it becomes time-independent. Informally, this corresponds to adding a  \emph{clock system} which automatically switches on the interaction and tunes it. Such clock systems are commonly used in quantum thermodynamics (see for instance~\cite{Malabarba2014}). In mathematical physics, they have been introduced by Howland~\cite{How} in a rigorous way. Roughly, the idea is to think as time represented by the position $x$ of a particle moving on a line, and to consider the time-independent Hamiltonian 
\begin{equation}\label{howlanddil}
	\hat{H}=-\rmi\frac{\d}{\d x}\otimes \unit \otimes \unit + g(x) \otimes H_{1\leftrightarrow 2},
	\end{equation}
in which the previous time-dependent coupling has become an unbounded potential term $g(x)$ for the clock particle, which moves by rigid translations with constant speed, generated by the momentum operator $P =-\rmi \frac{\d}{\d x}$.
In Section~\ref{sec:DD} we will make this formal expression rigorous by showing that indeed such a self-adjoint time-independent dilation can be defined. 
We also show how such a dilation can be found for any Lindbladian channel $e^{t\mathcal{L}}$ and more generally for analytic families of completely positive trace-preserving (CPTP) maps.

It should be expected that the clock particle only provides an exact dilation if its  initial  state is a point-localised state at  $x=0$. One might expect that this is only possible with a wave function which is a non-physical delta function (that is, not a valid normalised state). Such objects are often informally introduced as `improper Dirac-vectors' or more rigorously  as elements of a rigged Hilbert space. However, it turns out that such objects are too singular for our purpose. Instead we construct a state in an algebraic picture that fulfills our dilation purposes and that should be regarded as the correct point-localised state of the bath. Hand-wavingly this might be considered as the 'square root of a delta function'. Interestingly enough, it turns out that the clock can be chosen to be classical, as its observables are commutative.

Once we have found a dilation of a noisy dynamics, we may ask whether such a dynamics can be dynamically decoupled, that is, can the system-bath interaction be averaged away by quick rotations of the system? In specific cases, this is known to be a highly dilation-dependent question~\cite{GN, ABFH}. A numerical simulation for the single-qubit amplitude damping model~\eqref{eq:ad} shown in  Figure~\ref{simulation}  \begin{figure}[tbp!]
\includegraphics[width=0.5\textwidth]{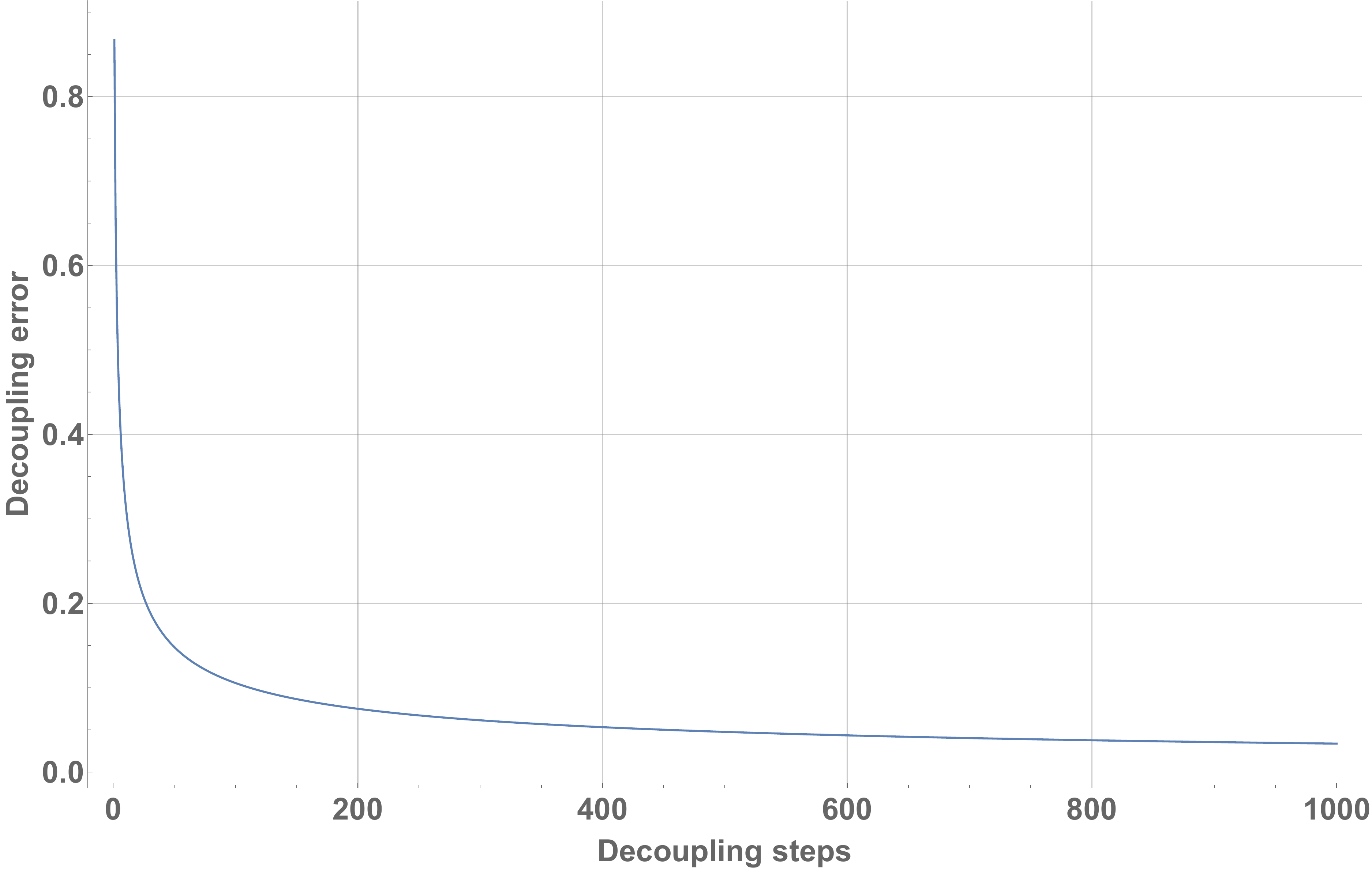}
\caption{\label{simulation} Numerical simulation of the decoupling error for the Hamiltonian Eq.~(\ref{howlanddil}) dilating the single-qubit amplitude damping model. We show an upper bound to the decoupling error (defined later in Eq.~(\ref{eq:toprove})) as a function of the pulse time steps $t/\Delta t$ for a total time $t=1$.}
\end{figure}
shows that the decoupling error approaches $0$ as the decoupling time steps $\Delta t$ tend to $0$, i.e. as the number of decoupling operations $t/\Delta t$ tends to $\infty$. In Section~\ref{sec:DD} we provide a mathematical proof of this fact, not only for the amplitude damping model but for any reasonable finite-dimensional noisy quantum system. For the sake of simplicity, we only deal with bang-bang dynamical decoupling operations~\cite{VL98}. See, however, Remark \ref{remarkeuler}. Roughly, the reason why decoupling works is that the clock state does not change the dynamics, and that the singularity at $t=0$ in~\eqref{eq:H12} integrates to something continuous. This is suprising, because previous studies~\cite{GN,JEK,Class} seemed to indicate that dilations of amplitude damping might be particularly bity.  

\section{Point localisation and singular states}\label{sec:singular}

Consider the separable Hilbert space $\K=L^2(\R)$ and a function $\psi=\unit_{[0,1]}$, which has $\|\psi\|_2 =1$. 
Then for every $t\in\R$ and $\eps>0$, let
\[
\psi_{t,\eps} = \frac{1}{\sqrt{\eps}} \psi\Big( \frac{\cdot- t}{\eps} \Big),
\]
which has support in $[t,t+\eps]$ and again $\|\psi_{t,\eps}\|_2 =1$.
Let
\[
\nu_t (A) = \lim_{\eps\downarrow 0} \langle \psi_{t,\eps}, A \psi_{t,\eps} \rangle,
\]
for suitable $A\in B(\K)$; we would like to understand which $A$ are suitable. To this end, we offer four options, explain their properties and compare them.

\begin{option}\label{sub:1}
Let $X$ and $P$ denote the usual position and momentum operators on $\K$ on their natural domains $\D(X)=\{\xi\in\K : X\xi\in\K\}$ and $\D(P)=\{\xi\in\K : P\xi\in\K\}=H^1(\R)$, respectively. We define
\[
\A= \{f(X):\; f\in C_b(\R)\}\simeq C_b(\R),
\]
where $C_b(\R)$ is the algebra of bounded continuous functions with sup norm $\|\cdot\|_\infty$, an Abelian C*-subalgebra of $B(\K)$. Then $\nu_t$ is well-defined on $\A$ and in fact $\nu_t = \delta_t$, the $\delta$-distribution centred at $t$, or in other words, the (continuous) evaluation functional $f(X) \in \A \mapsto f(t)\in\C$. This is actually a pure state.
Notice the translation property:
\begin{equation*}
\nu_t(T(s)^* A T(s)  ) = \nu_{t+s} (A),
\end{equation*}
for all $A\in \A$, where $s \mapsto T(s)= \rme^{-\rmi  s P}$ is the group of translations ($T(s) \varphi(x) = \varphi(x-s)$, $\varphi \in \K$).

Using~\cite[Prop.2.3.24]{BR1} and~\cite[Sect.3]{La}, one obtains:

\begin{theorem}\label{th:Option1}
$\nu_t$ is a well-defined state on $\A$. It may be extended to a pure state $\hat{\nu}_t$ on $B(\K)$. This extension is singular (i.e. it cannot be associated with a density operator) and possibly not unique.
\end{theorem}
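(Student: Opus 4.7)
The plan is to establish the three assertions---well-definedness and purity of $\nu_t$ on $\A$, existence of a pure extension to $B(\K)$, and singularity of any such extension---in turn. For well-definedness, I would compute directly
\[
\langle \psi_{t,\eps},f(X)\psi_{t,\eps}\rangle = \frac{1}{\eps}\int_t^{t+\eps} f(x)\,\d x \xrightarrow{\eps\downarrow 0} f(t)
\]
by continuity of $f\in C_b(\R)$. So $\nu_t$ is the evaluation character $f(X)\mapsto f(t)$ on $\A\simeq C_b(\R)$, which is multiplicative and unital, hence a pure state on $\A$.

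For the pure extension, I would follow the standard Hahn-Banach--Krein-Milman route (the content of~\cite[Prop.\,2.3.24]{BR1}). Let $E_t$ denote the set of states of $B(\K)$ whose restriction to $\A$ equals $\nu_t$. Applying Hahn-Banach to the norm-one positive functional $\nu_t$ shows $E_t\neq\emptyset$; it is obviously convex and weak*-closed, hence weak*-compact by Banach-Alaoglu, and Krein-Milman then provides an extreme point $\hat{\nu}_t$ of $E_t$. The purity of $\hat{\nu}_t$ in the full state space of $B(\K)$ follows by a standard decomposition argument: if $\hat{\nu}_t=\lambda\phi_1+(1-\lambda)\phi_2$ with $\phi_j$ states of $B(\K)$ and $\lambda\in(0,1)$, then $\phi_j|_\A$ are states of $\A$ whose convex combination is the pure state $\nu_t$, which forces $\phi_j|_\A=\nu_t$, hence $\phi_j\in E_t$; extremality of $\hat{\nu}_t$ inside $E_t$ then yields $\phi_1=\phi_2=\hat{\nu}_t$. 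I expect this purity step to be the main conceptual point, though it is classical.

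For singularity I would exhibit a sequence witnessing non-normality of $\hat{\nu}_t$. Pick tent functions $f_n\in C_b(\R)$ with $\supp f_n\subset[t-1/n,\,t+1/n]$, $f_n(t)=1$ and $0\leq f_n\leq 1$. Then $f_n\to 0$ pointwise and boundedly, so dominated convergence in $\K$ gives $f_n(X)\xi\to 0$ for every $\xi\in\K$, i.e.\ $f_n(X)\to 0$ strongly, and since $\|f_n(X)\|\leq 1$ also $\sigma$-weakly. Any state of the form $A\mapsto \tr(\rho A)$ with $\rho$ a density operator would therefore satisfy $\tr(\rho f_n(X))\to 0$, whereas $\hat{\nu}_t(f_n(X))=\nu_t(f_n(X))=f_n(t)=1$ for every $n$, a contradiction; hence $\hat{\nu}_t$ is singular. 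The residual `possibly not unique' assertion then comes for free, since Krein-Milman offers no canonical choice of extreme point in $E_t$; distinct concrete pure extensions can be constructed as discussed in~\cite[Sect.\,3]{La}.
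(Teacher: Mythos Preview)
Your proposal is correct and follows the same route as the paper, which dispatches the theorem by citing precisely the two references you invoke: \cite[Prop.\,2.3.24]{BR1} for the existence of a pure extension and \cite[Sect.\,3]{La} for the (possible) non-uniqueness. Your added arguments---the direct computation identifying $\nu_t$ with the evaluation character, the Hahn--Banach/Krein--Milman extraction of an extreme extension together with the purity transfer, and the tent-function witness for non-normality---are exactly the standard details behind those citations, so there is nothing to add.
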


\end{option}

\begin{option}\label{sub:1a}
Let
\[
\tilde{\A} = C^*\{f(X), g(P):\; f,g \in C_\infty(\R)\}
\]
be the C*-algebra generated by functions $f$ and $g$ of the position and momentum operators, 
where $C_\infty(\R)$ is the algebra of bounded continuous functions (w.r.t. the sup norm $\|\cdot\|_\infty$) such that their limits at $\pm\infty$ exist, .

\begin{theorem}\label{th:Option1a}
$\nu_t$ is a well-defined state on $\tilde{\A}$. Moreover, $[f(X),g(P)]\in K(\K)$, and $\nu_t$ vanishes on all compact operators.
\end{theorem}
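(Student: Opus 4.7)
The plan is to verify the three assertions in sequence, using that $\psi_{t,\eps}\to 0$ weakly in $\K$ as $\eps\downarrow 0$. Indeed, Cauchy--Schwarz gives $|\langle\varphi,\psi_{t,\eps}\rangle|\leq \|\varphi\cdot\mathbf{1}_{[t,t+\eps]}\|_2\to 0$ for every $\varphi\in L^2$, and since compact operators map weakly null sequences to norm null ones, $\langle\psi_{t,\eps},K\psi_{t,\eps}\rangle\to 0$ for all $K\in K(\K)$. This gives the vanishing on compacts as soon as $\nu_t$ is shown to be well-defined on $\tilde\A$.

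For compactness of the commutator, I would start from the classical fact that $f(X)g(P)$ is Hilbert--Schmidt when $f,g$ are Schwartz, with integral kernel $\frac{1}{\sqrt{2\pi}}\,f(x)\,\check g(x-y)$, and extend to $f,g\in C_0(\R)$ by uniform approximation. For general $f,g\in C_\infty(\R)$ I would decompose $f=f_0+a_f$ and $g=g_0+a_g$ with $f_0,g_0\in C_0(\R)$ and $a_f,a_g$ fixed smooth interpolations matching the boundary values $f(\pm\infty)$ and $g(\pm\infty)$. The commutator terms involving at least one $C_0$ factor are compact by the preceding step; the remaining term $[a_f(X),a_g(P)]$ is handled via the Weyl pseudodifferential calculus, where its leading symbol $-\rmi\,a_f'(x)\,a_g'(\xi)$ vanishes at infinity, yielding compactness. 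In particular, $\tilde\A$ carries the structure of an extension of an abelian algebra by $K(\K)$ (using also that $f(X)g(P)\in K(\K)$ for $f,g\in C_0(\R)$ spans a dense subset of $K(\K)$), so every element of $\tilde\A$ is a norm limit of finite sums $\sum_i F_i(X)G_i(P)+K$ with $F_i,G_i\in C_\infty(\R)$ and $K\in K(\K)$.

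It remains to show that the limit defining $\nu_t$ exists on each product $F(X)G(P)$. Continuity of $F$ at $t$ yields $\|F(X)\psi_{t,\eps}-F(t)\psi_{t,\eps}\|_2\to 0$, so
\begin{equation*}
\langle \psi_{t,\eps},\,F(X)G(P)\,\psi_{t,\eps}\rangle \;=\; F(t)\,\langle \psi_{t,\eps},G(P)\psi_{t,\eps}\rangle \;+\; o(1).
\end{equation*}
The remaining inner product is computed in Fourier space: an explicit evaluation of $|\widehat{\psi_{t,\eps}}(k)|^2$ followed by the substitution $u=k\eps/2$ gives $\langle\psi_{t,\eps},G(P)\psi_{t,\eps}\rangle\to \tfrac12(G(\infty)+G(-\infty))$ for $G\in C_\infty(\R)$. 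The resulting functional $\sum_i F_i(X)G_i(P)\mapsto \sum_i F_i(t)\cdot\tfrac12(G_i(\infty)+G_i(-\infty))$ is positive and bounded by the operator norm; together with vanishing on $K(\K)$, it extends by continuity to a state on $\tilde\A$.

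The main obstacle I expect is the compactness of $[a_f(X),a_g(P)]$ for the step parts carrying the nonzero boundary values at $\pm\infty$, since this case is not directly reachable through the Schwartz--kernel argument used for $C_0(\R)$ and genuinely requires either the pseudodifferential calculus or a careful direct commutator estimate based on smoothness of $a_f$ and $a_g$.
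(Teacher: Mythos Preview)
Your proposal is correct in outline and very close to the paper's own argument. The paper proves this theorem as a corollary of part~(iii) of Theorem~\ref{th:Option2}: it shows that $\nu_t$ vanishes on rank-one operators (hence on $K(\K)$ by norm closure), computes the limit on a single product $f(X)g(P)$ by Fourier methods to obtain $f(t)\cdot\tfrac12(g(+\infty)+g(-\infty))$, and then invokes \cite[Lem.~3.9.1]{Cor} for the structural fact that every monomial in $\tilde\A$ has the form $f(X)g(P)+k$ with $k\in K(\K)$. Your argument follows the same architecture, but you replace the citation to Cordes by an explicit sketch via pseudodifferential calculus, and your treatment of $\nu_t\!\restriction_{K(\K)}=0$ via $\psi_{t,\eps}\to 0$ weakly is slightly slicker than the paper's rank-one plus density argument.

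One small imprecision: you write that ``the commutator terms involving at least one $C_0$ factor are compact by the preceding step,'' but your preceding step only shows $f(X)g(P)\in K(\K)$ when \emph{both} $f,g\in C_0(\R)$. The cross term $[f_0(X),a_g(P)]$ with $f_0\in C_0$ and $a_g$ a smooth step is not covered by that; it needs the same pseudodifferential (or commutator-expansion) argument you reserve for $[a_f(X),a_g(P)]$, since $a_g'\in C_0$ is what makes the leading symbol decay. This is a misattribution rather than a real gap---the tool you already propose handles it---but in a clean write-up you should either absorb all the commutator terms into one pseudodifferential argument or cite Cordes directly as the paper does.
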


We postpone the proof after the next theorem.

The algebra $\tilde{\A}$ and slightly bigger algebras, e.g. the ones generated by functions with vanishing oscillations at $\pm\infty$ are extensively studied in harmonic analysis and pseudo-differential operators, cf.~\cite[Sec.3]{Cor} and applications of the limit state construction.

\end{option}

\begin{option}\label{sub:2}
We define
\[
\D=\{A\in B(\K) : \;(\forall t\in\R)  \lim_{\eps\downarrow 0} \langle \psi_{t,\eps}, A \psi_{t,\eps} \rangle \textrm{ exists} \}.
\]
\begin{theorem}\label{th:Option2}
The following properties hold:
\begin{itemize}
\item[(i)] $\D$ is a *-closed Banach space;
\item[(ii)] $\D$ is not an algebra, in particular $\D\subsetneq B(\K)$;
\item[(iii)] $\D$ contains the unital algebra $\tilde{\A}$;
\item[(iv)] $\D$ is $\sigma$-weakly dense in $B(\K)$;
\item[(v)] $\D$ is translation-invariant.
\item[(vi)] $\nu_t$ is a singular state on $\D$, i.e., a unital positive linear map $\D\ra\C$ which is not $\sigma$-weakly continuous. It extends to a singular state $\bar{\nu}_t$ on $B(\K)$.
\end{itemize}
\end{theorem}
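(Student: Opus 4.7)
The plan is to organise the proof around two basic ingredients: the translation identity $T(s)\psi_{t,\eps}=\psi_{t+s,\eps}$, immediate from the definition of $\psi_{t,\eps}$, and the pointwise bound
\begin{equation*}
\bigl|\langle \psi_{t,\eps},\xi\rangle\bigr| = \eps^{-1/2}\Bigl|\int_t^{t+\eps}\overline{\xi(x)}\,\d x\Bigr| \le \|\xi\|_{L^2([t,t+\eps])}\xrightarrow[\eps\downarrow 0]{} 0,
\end{equation*}
valid for every $\xi\in L^2(\R)$ and every $t\in\R$ by absolute continuity of the $L^2$-norm over shrinking intervals.

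From the translation identity, $\langle \psi_{t,\eps}, T(s)^*AT(s)\psi_{t,\eps}\rangle = \langle \psi_{t+s,\eps}, A\psi_{t+s,\eps}\rangle$, which immediately gives (v) by shifting the parameter $t$. For (i), $\ast$-closedness follows from $\langle \psi_{t,\eps},A^*\psi_{t,\eps}\rangle=\overline{\langle \psi_{t,\eps},A\psi_{t,\eps}\rangle}$, linearity is automatic, and norm-completeness is a standard $\eps/3$ argument exploiting $|\langle\psi_{t,\eps},(A-A_n)\psi_{t,\eps}\rangle|\le\|A-A_n\|$: the values at $t$ for each $A_n$ form a Cauchy sequence in $\C$, and this forces existence of the limit for $A$. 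Property (iii) is a direct corollary of Theorem \ref{th:Option1a}.

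For (vi), positivity, unitality and linearity of $\nu_t$ are inherited from the vector states at positive $\eps$. Singularity, as well as (iv), both follow from the rank-one computation: for any $\xi,\eta\in\K=L^2(\R)$ one has $\langle\psi_{t,\eps},|\xi\rangle\langle\eta|\psi_{t,\eps}\rangle = \langle\psi_{t,\eps},\xi\rangle\overline{\langle\psi_{t,\eps},\eta\rangle}\to 0$ as $\eps\downarrow 0$, for every $t$, by the displayed bound. Together with (i) this shows $K(\K)\subset\D$ and that $\nu_t$ vanishes on $K(\K)$, so it cannot be $\sigma$-weakly continuous; (iv) is then immediate because $K(\K)$ is already $\sigma$-weakly dense in $B(\K)$. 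The extension $\bar\nu_t$ on $B(\K)$ is obtained by Hahn-Banach: $\nu_t$ is a norm-one functional on the unital operator system $\D$, any norm-preserving extension $\bar\nu_t$ satisfies $\bar\nu_t(\unit)=1=\|\bar\nu_t\|$ and is therefore automatically a state, and it inherits the vanishing on $K(\K)\subset\D$, so remains singular.

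The main obstacle is (ii). My plan is to exhibit a bounded measurable $f:\R\to\R$ whose one-sided Cesaro means $\eps^{-1}\int_t^{t+\eps}f$ exist for every $t\in\R$, while those of $f^2$ fail to exist at some point; the multiplication operator $M_f$ then lies in $\D$, but $M_f^2=M_{f^2}$ does not, which simultaneously yields the failure of closure under products and the strict inclusion $\D\subsetneq B(\K)$. The subtlety is that pointwise existence of the Cesaro means everywhere is much stronger than the almost-everywhere content of the Lebesgue differentiation theorem, so $f$ must be constructed explicitly, for instance as a signed lacunary sum $\sum_n c_n\unit_{I_n}$ on intervals $I_n$ shrinking to $0$, with the signs of the $c_n$ chosen so that signed averages telescope to a limit while the unsigned $|c_n|^2$ produce oscillation in the averages of $f^2$. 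Tuning the geometry of the $I_n$ and the coefficients $c_n$ simultaneously is the technical heart of the argument.
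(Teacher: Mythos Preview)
Your treatment of (i), (iv), (v) and (vi) matches the paper's argument essentially line for line. Two points deserve comment.

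For (iii) you invoke Theorem~\ref{th:Option1a} as a black box, but in the paper its proof is explicitly postponed and then \emph{deduced from} part (iii) of the present theorem (``This follows from (iii) in the preceding proof''). So as written your argument is circular in the paper's logical order; you need to establish (iii) directly. The paper does this in three steps: show $\nu_t$ vanishes on rank-one operators (your own displayed computation already gives this) and hence on $K(\K)$ by norm-density; compute $\nu_{t,\eps}(f(X)g(P))$ explicitly for $f,g\in C_\infty(\R)$ via the Fourier transform; then appeal to \cite[Lem.3.9.1]{Cor} to write an arbitrary monomial in $\tilde\A$ as $f(X)g(P)+k$ with $k$ compact, and conclude by (i).

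For (ii) your plan---a multiplication operator $M_f$ with $M_f\in\D$ but $M_{f^2}\notin\D$---is viable but genuinely harder than the paper's route. The obstruction you sense is real: for a step function constant on a single lacunary family of intervals, convergence of the one-sided Ces\`aro means of $f$ at $0$ forces convergence of the step values themselves (via the recursion $a_m=2A_{m-1}-A_m$ for the exponentially weighted tail averages), and hence of the Ces\`aro means of $f^2$. One needs a two-scale construction, e.g.\ blocks $B_k=(4^{-k},4^{-k+1}]$ on each of which $f$ alternates $\pm h_k$ on a fine equipartition into $2N_k$ pieces (so $\int_{B_k}f=0$ and boundary contributions are $O(h_k/N_k)\to 0$), with $h_k\in\{1,2\}$ slowly alternating so that $f^2\equiv h_k^2$ on $B_k$ has oscillating Ces\`aro means at $0$. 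The paper sidesteps all of this with a short trick: it exhibits a positive bounded $f$ with $\nu_{0,2^{-n}}(f(X))$ oscillating (so $f(X)\notin\D$), then sets $A=T(2)\sqrt{f}(X)$. The shift by $2$ forces $\psi_{t,\eps}$ and $A\psi_{t,\eps}$ to have disjoint supports for small $\eps$, giving $A\in\D$ trivially, while $A^*A=f(X)\notin\D$. This buys a one-line verification of membership in $\D$ in place of the delicate Ces\`aro analysis your approach requires.
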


\begin{proof}
Let us write $\nu_{t,\eps}= \langle \psi_{t,\eps}, \cdot \psi_{t,\eps} \rangle$.

(i) It follows from the linearity of $\nu_{t,\eps}$ and $\lim_{\eps\downarrow 0}$ that $\D$ is a vector space. Moreover, for $A\in\D$, we have
\[
\lim_{\eps\downarrow 0}\nu_{t,\eps} (A^*) =\lim_{\eps\downarrow 0}\overline{\nu_{t,\eps} (A)} =\overline{\nu_t (A)},
\]
so $A^*\in\D$, too. Finally, $\nu_t$ is norm-continuous, namely we have
\[
|\nu_t(A)| = \lim_{\eps\downarrow 0} |\nu_{t,\eps}(A)| \le \|A\|.
\]
To see that $\D$ is norm-closed, consider a Cauchy sequence $(A_n)_{n\in\N}$ in $\D$ with limit $A\in B(\K)$. Since $\nu_t$ is bounded by $1$, $(\nu_t(A_n))_{n\in\N}$ forms a Cauchy sequence in $\C$ and hence converges to some limit $\ell\in\C$. Given $\lambda>0$, we can now find $n_0\in\N$ such that $\|A_n-A\|<\lambda/3$ for all $n>n_0$. Then there is $\eps_0>0$ such that for all $0<\eps<\eps_0$
\[
\|\nu_{t,\eps}(A) - \ell \| \le \|\nu_{t,\eps}(A) - \nu_{t,\eps}(A_n) \| + \|\nu_{t,\eps}(A_n) - \nu_t(A_n) \|+ \|\nu_t(A_n) - \ell \|
< \frac{\lambda}{3} + \frac{\lambda}{3}  + \frac{\lambda}{3} = \lambda,
\]
which proves that $A\in\D$ and $\nu_t(A) = \ell$, so $\D$ is norm-closed, thus a Banach space.

(ii) Consider the function
\[
f:\R\ra\R, \quad f = 1+ \sum_{n\in\N} (-1)^{n-1} \unit_{[2^{-n},2^{-n+1}]}.
\]
This is bounded and hence $f(X)\in B(\K)$. We compute
\[
\nu_{0,2^{-n}}(f) = 2^n \int_0^{2^{-n}} f(t) \rmd t
\]
and find that the sequence  $(\nu_{0,2^{-n}}(f(X)))_{n\in\N}$ alternates between $\nu_{0,1}(f(X))=\frac53$ and $2-\nu_{0,1}(f(X)) =\frac13$, so 
$(\nu_{0,2^{-n}}(f(X)))_{n\in\N}$ does not converge, hence $f(X)\not\in\D$. Now $f$ is positive, so let $g$ be the positive square-root of $f$ as a real-valued function. Let $A=T(2)g(X)$, the composition of the left shift by $2$ (a unitary operator in $B(\K)$) and multiplication by $g$. Then $\nu_{t,\eps}(A)=0$, for all $t\in\R$ and all $0<\eps<1$. Hence $A,A^*\in\D$ and $\nu_t(A)=0=\nu_t(A^*)$ whereas $A^*A = f(X)\not\in\D$. So $\D$ is not an algebra.

(iii) Given $f\in C_\infty(\R)$, we get
\[
|\langle \psi_{t,\eps}, f\rangle| = \|f\|_\infty \frac{1}{\sqrt{\eps}} \int_0^{\eps} \psi(s/\eps)\rmd s 
= \|f\|_\infty \sqrt{\eps} \int_0^1 \psi(s)\rmd s  \ra 0, \quad \eps\downarrow 0.
\]
Hence the rank-1 operator $|f\rangle\langle f |$ lies in $\D$. If instead $f\in\K$ then it can be approximated in $\|\cdot\|_2$ by a sequence $(f_n)_{n\in\N}$ in $C_\infty(\R)$, so for any $\lambda>0$ we find $n\in\N$ large enough and hence $\eps>0$ small enough such that 
\[
|\langle \psi_{0,\eps}, f\rangle | \le |\langle \psi_{0,\eps}, f-f_n\rangle |  + |\langle \psi_{0,\eps}, f_n\rangle | \le \|f-f_n\|_2 +   |\langle \psi_{0,\eps}, f_n\rangle | < 2 \frac{\lambda}{2} = \lambda.
\]
Thus the rank-1 operator $|f\rangle\langle f |$ lies in $\D$ and in fact $\nu_t(|f\rangle\langle f |)=0$. Since $\D$ is a Banach space, we get that the closure of the span of rank-1 operators $K(\K)$ lies in $\D$ and $\nu_t$ vanishes on $K(\K)$.

The statement about functions of the position operator $X$ is clear from Option~\ref{sub:1} above. Concerning the momentum operator $P$, we use Fourier transformation and find
\begin{align*}
\nu_{t,\eps}(f(X)g(P)) = \langle \psi_{t,\eps}, f(X)g(P) \psi_{t,\eps} \rangle 
= \langle f(X) \psi_{t,\eps}, g(P) \psi_{t,\eps} \rangle 
= \langle f(X) \psi_{t,\eps}, (g(X) \hat{\psi}_{t,\eps})^\vee \rangle ,
\end{align*}
where $\hat{\psi}$ denotes the Fourier transform and $\psi^\vee$ denotes the inverse Fourier transform of $\psi$ in $\K$; the latter equation follows from the spectral theorem for continuous functions. From the structure of $\psi_{t,\eps}$, it follows that
\[
\langle \psi_{t,\eps}, f(X)g(P) \psi_{t,\eps} \rangle  
\ra f(0)\frac{1}{2} \Big(\lim_{p\ra \infty}g(p) + \lim_{p\ra -\infty}g(p) \Big), \quad \eps\downarrow 0,
\]
provided the limits $\lim_{x\ra \pm\infty}f(x)$ exist -- this is the only condition $f$ has to satisfy.

According to~\cite[Lem.3.9.1]{Cor}, an arbitrary monomial in functions of $X$ and $P$ may be written as $f(X)g(P) + k$ where $k\in K(\K)$, and according to the previous steps, this shows that the monomial lies in $\tilde{\A}$. By continuity this extends to the norm-closure $\tilde{\A}$ as in the proof of part (i) above.

(iv) It is well-known that $K(\K)$ is $\sigma$-weakly dense in $B(\K)$, so by (iii) $\D$ is $\sigma$-weakly dense in $B(\K)$.

(v) This is obvious from the construction: if $A\in\D$ then $\nu_{t,\eps} (T(s)^* A T(s)) = \nu_{s+t,\eps} (A)$, so  $T(s)^* A T(s)$ (translation by $s$) lies in $\D$.

(vi) To see unitality and positivity, notice that $\nu_{t,\eps}$ are unital and positive, hence the limit $\nu_t$ is unital and positive as well. Hence $\nu_t$ may be regarded as a state on a $\sigma$-weakly dense subspace of $B(\K)$. Since $\nu_t(A) = 0$ for every $A\in K(\K)$ but $\nu_t(\unit)=1$ and $\unit$ can be approximated by $K(\K)$ in the $\sigma$-weak topology, we see that $\nu_t$ is not continuous in that topology and such a state is called singular~\cite[Def.III.2.15]{Tak}.

Since $\D\subset B(\K)$ is a norm-closed subspace, $\nu_t$ can be extended to a linear functional $\bar{\nu}_t$ on $B(\K)$ with $\|\bar{\nu}_t\|=\bar{\nu}_t(\unit)=1$, due to the Hahn-Banach extension theorem. It follows from~\cite[Prop.2.3.11]{BR1} that $\bar{\nu}_t$ is positive, hence a state on $B(\K)$.
\end{proof}

\begin{proof}[Proof of Theorem~\ref{th:Option1a}]
This follows from (iii) in the preceding proof, restricting $\nu_t$ from $\D$ to $\tilde{\A}$. 
\end{proof}
\end{option}

\begin{option}\label{sub:3}

Consider the nonseparable Hilbert space $\K=L^2(\R,\mu)$ with $\mu$ being the counting measure and with standard scalar product and canonical orthonormal basis $(e_t)_{t\in\R}$. Then on all of $B(\K)$, for every $t\in\R$, we define the normal pure (vector) state
\[
\nu_t : B(\K)\ra\C,\quad \nu_t(A) = \langle e_t, A e_t\rangle . 
\]

Notice that time evolution in this case is not strongly continuous and hence Stone's theorem cannot be applied. For the same reason, the state $\nu_t$ cannot be obtained through a limiting procedure as in the previous options.
\end{option}

\subsection*{Discussion and comparison}

The physical meaning of $\nu_t$ is that of a state which is point-localised at $t$. One often works with idealised point-localised states in physics when dealing with wave functions. These wave functions do not lie in $\K$ but instead in a larger space $\mathcal{X}$ including distributions and they pair with test function space $\mathcal{X}^*$. The inclusion $\mathcal{X}^*\subset\K\subset\mathcal{X}$ is called a Gelfand triple and the corresponding theory is described in the framework of rigged Hilbert spaces. While rigged Hilbert spaces have a fairly elaborated theory, in particular the decomposition into an improper eigenbasis, they  do not offer a description of the corresponding states on $B(\K)$. Our  Option~\ref{sub:2} is an attempt to provide such a description, namely to construct a state on $B(\K)$ describing an idealised point-localisation. As in the case of wave functions, this state cannot pair with the whole algebra $B(\K)$ but only with a (dense) subspace $\D$, which forms the counterpart of $\mathcal{X}^*$ above. Since $B(\K)$ is not a Hilbert space in a natural form, we cannot speak of a generalised eigenbasis decomposition and we do not have the nice structure here that Gelfand triples offer. However, Theorem~\ref{th:Option2} explains the most important properties of $\D$. It is the biggest possible space where all $\nu_t$ can be defined, and this definition is physically motivated and meaningful.

The state $\nu_t$ in Options~\ref{sub:1},~\ref{sub:1a} and~\ref{sub:2} can be extended to states on $B(\K)$, namely $\hat{\nu}_t$ and $\bar{\nu}_t$. It is not clear to us whether these three extensions may coincide. In other words, it is unknown whether the extensions are unique and whether they coincide on $\D$. It is also unclear what the physical meaning of these extensions could be.

A completely different approach is Option~\ref{sub:3}. In this case we have got a normal (i.e. associated to a density operator) and pure state but at the expense of separability and strong continuity of the time evolution, which makes it non-standard in quantum mechanics.

From a mathematical point of view it might be more convenient to work with a nice C*-algebra. For this reason we have provided Option~\ref{sub:1}, namely $\A\subset\D$, which will be sufficient for our dynamical decoupling application in the following section and in addition is Abelian, hence in a sense a classical system. As seen above, $\nu_t$ acts nontrivially on many elements in $\D\setminus\A$, so there are ``quantum effects" in $\D$ which are lost when working with $\A$. The reasoning in the following section works equally well for Options~\ref{sub:1a},~\ref{sub:2} and~\ref{sub:3}.

\section{The dilation and dynamical decoupling}\label{sec:DD}

Our setting is the following. Consider a finite-dimensional quantum system with Hilbert space $\H_0$ of finite dimension $d_0$ and an \emph{analytic time-dependent quantum channel}, that is a continuous one-parameter family of unital completely positive (UCP) maps 
\[
\Phi_t: B(\H_0) \to B(\H_0), \qquad t\ge 0,
\] 
which is analytic as a function of $t\geq0$ (i.e.\ it is the restriction to $[0,\infty)$ of a holomorphic function), 
and  with $\Phi_0=\id.$ An important example of such a family is given by a \emph{quantum dynamical semigroup} $(\Phi_t=\rme^{t \mathcal{L}})_{t\ge 0}$~\cite{Davies}, whose generator $\mathcal{L}$ has the Gorini-Kossakowski-Lindblad-Sudarshan (GKLS) form~\cite{GKS,Lindblad}.

Since we will have to work with non-normal states, for which there are no corresponding density matrices, we will refrain from using density matrices in this section and instead write everything in terms of (algebraic) states. The time-evolution maps act then naturally on the algebra of observables, i.e., we use the Heisenberg picture.

 We would  like to construct a C*-algebraic autonomous unitary dilation to a larger system in the following sense:
\begin{definition}\label{def:unidilation}
Let $\{\Phi_t\}_{t\geq 0}$ be a one-parameter family of UCP maps on a C*-algebra $\A \subset B(\H_0)$. A \emph{unitary dilation of} $(\A,\Phi_t)$ is defined as a triple $\bigl(\B\otimes\A,  \omega_e, \hat{U}(t)\bigr)$, where $\B \subset B(\H_e)$ is a C*-algebra on a Hilbert space $\H_e$,  $\omega_e$ is a state on $\B$, and $\{\hat{U}(t)\}_{t\in\R}$ is a family of unitary operators on the product $\hat{\H}=\H_e \otimes \H_0$, with the property that
\[
(\omega_e  \otimes \varrho )\circ \Ad (\hat{U}(t)^*)(\unit\otimes A) = \varrho \circ\Phi_t(A),
\]
for all $A\in \A$ and all states $\varrho$ on $\A$, where $\Ad (\hat{U}(t)^*) = \hat{U}(t)^* \cdot \hat{U}(t)$.   We say that the unitary dilation is \emph{autonomous}  if $\hat{U}(t)$ is a one-parameter group. We say that the unitary dilation is \emph{strongly-continuous} if the map $t\in\R\mapsto \hat{U}(t)\xi\in\hat{\H}$ is continuous, for every $\xi\in\hat{\H}$.
\end{definition}

Notice that $\A$ may be regarded as the algebra of observables of the quantum system; when it is the full algebra of bounded operators on a Hilbert space, that is $\B\otimes\A= B(\H_e) \otimes B(\H_0)= B(\hat{\H})$, and $\omega_e$ is a normal state associated with a density matrix, Definition~\ref{def:unidilation} reduces to the standard definition of unitary dilation of a system in $\H_0$ with an environment in $\H_e$, commonly used in quantum information. Here, in order to accomodate a quantum clock with  point-localised states, we extend the definition by allowing singular states defined on sub-algebras. According to the options discussed in the previous section we could equally extend such singular states to the whole $B(\hat{\H})$ without affecting our results.

We remark that a unitary dilation is a particular  C*-algebraic dilation $\bigl(\B\otimes\A, \Ad (\hat{U}(t)^*), \iota, P\bigr)$ in the notation of~\cite[Def.2.3]{SS}, where 
\[
\iota (A) = \unit\otimes A, \quad \varrho \circ P(\hat{A}) = (\omega_e\otimes\varrho ) (\hat{A}), \quad \forall A\in \A, \; \hat{A}\in \B\otimes \A.
\]
This is just meant to be reference and we will not need this general framework of C*-algebraic dilations in the present paper.

The second ingredient we need is a decoupling scheme for the given system. We recall~\cite[Def.2.1]{ABFH}:

\begin{definition}\label{def:decset}
A \emph{decoupling set for $\H_0$} is a finite set of unitary operators $V\subset B(\H_0)$ such that
\[
\frac{1}{|V|} \sum_{v\in V} v A v^* = \frac{1}{d_0} \tr(A) \unit,\quad \forall A\in B(\H_0).
\]
If $n$ is a multiple of the cardinality $|V|$ then a \emph{decoupling cycle} of length $n$ is a cycle $(v_1, v_2,\ldots,v_n)$  through $V$ which reaches each element of $V$ the same number of times. 
\end{definition}

Henceforth we fix $V$.

\begin{definition}
With the above notation and a given autonomous unitary dilation $(\B\otimes B(\H_0),  \omega_e, \hat{U}(t))$, the time evolution under the influence of a decoupling cycle $(v_1, v_2,\ldots,v_n)$ is given by
\begin{equation}\label{eq:def-Unt}
\hat{U}^{(n)}(t)= \hat{v}_n \hat{U}\Bigl(\frac{t_n-t_{n-1}}{n}\Bigr) \hat{v}_n^* \cdots \hat{v}_2\hat{U}\Bigl(\frac{t_2-t_1}{n}\Bigr)\hat{v}_2^* \hat{v}_1\hat{U}\Bigl(\frac{t_1-0}{n}\Bigr) \hat{v}_1^*,
\end{equation} 
where $\hat{v}_j = \unit \otimes v_j$, $j=1,\dots,n$. 

We say that \emph{dynamical decoupling works} if, for every given $t>0$ and $\eps>0$, there is $n\in\N$ such that for every decoupling cycle of length $n$ in $V$ and every initial state $\varrho$ of the system, we have
\begin{equation}\label{eq:toprove}
E_n(\varrho)= \| (\omega_e\otimes\varrho )\circ \Ad(\hat{U}^{(n)}(t)^*)(\unit\otimes\cdot) - \varrho  \| <\eps.
\end{equation}
\end{definition}
Notice that this definition is stronger than \cite[Def.2.2]{ABFH} and probably more intuitive from a physical point of view.

Our goal is to construct an autonomous unitary dilation $(\B\otimes B(\H_0),  \omega_e, \hat{U}(t))$ such that dynamical decoupling works. Indeed we are going to constructively prove the following theorem:

\begin{theorem}\label{th:decoupling}
Given a finite-dimensional quantum system with an analytic quantum channel 
and a decoupling set as above. Then there is an autonomous unitary dilation such that dynamical decoupling works.
\end{theorem}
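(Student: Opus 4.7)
The plan is to combine the time-dependent Stinespring dilation of~\cite{DMB} with Howland's clock trick from Section~\ref{sec:phys} and the point-localised clock states of Section~\ref{sec:singular}, and then to reduce the resulting bang-bang problem to a standard averaging argument on a finite-dimensional bath. First I would feed the analytic channel $\Phi_t$ into the time-dependent Stinespring theorem of~\cite{DMB} to obtain a finite-dimensional $\H_e'$, a vector state $\omega_e'$ on $B(\H_e')$, and a strongly continuous family $U(t)$ of unitaries on $\H_e'\otimes\H_0$ with $U(0)=\unit$, analytic on $(0,\infty)$, and $\omega_e'(U(t)^*(\unit\otimes A)U(t))=\Phi_t(A)$. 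Its generator $H_I(t)=\rmi(\partial_tU(t))U(t)^{-1}$, integrable at $0$ but possibly singular there, plays the role of time-dependent interaction. Then on $\hat{\H}=L^2(\R)\otimes\H_e'\otimes\H_0$ I would apply Howland to form the time-independent self-adjoint operator
\[
\hat{H}=P\otimes\unit\otimes\unit+M_{H_I},\qquad P=-\rmi\,\d/\d x,
\]
whose unitary group $\hat{U}(t)=\rme^{-\rmi t\hat{H}}$ acts as $(\hat{U}(t)\xi)(x)=U(x,x-t)\xi(x-t)$ via the two-parameter propagator $U(x,y)=U(x)U(y)^{-1}$. Take the bath state $\omega_e=\nu_0\otimes\omega_e'$ with $\nu_0$ the point-localised clock state at $0$ from Section~\ref{sec:singular}; Option~\ref{sub:1} already suffices.

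A direct computation shows that $\hat{U}(t)^*(\unit\otimes A)\hat{U}(t)$ is the multiplication operator with value $U(x+t,x)^*(\unit\otimes A)U(x+t,x)$ at position $x$; evaluating the clock part through $\nu_0$ selects the value at $x=0$, and $\omega_e'\otimes\varrho$ then returns $\varrho\circ\Phi_t(A)$ by the Stinespring property. This verifies Definition~\ref{def:unidilation} for $(\B\otimes B(\H_0),\omega_e,\hat{U}(t))$.

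Since each pulse $\hat{v}_j=\unit\otimes\unit\otimes v_j$ commutes with $P$ and with multiplication by functions of $x$, inserting the pulses into $\hat{U}(t)$ only replaces $H_I(x)$ by $v_jH_I(x)v_j^*$ on each sub-interval, leaving the clock translation untouched. Consequently
\[
(\omega_e\otimes\varrho)\circ\Ad(\hat{U}^{(n)}(t)^*)(\unit\otimes A)=(\omega_e'\otimes\varrho)\bigl(W^{(n)}(t)^*(\unit\otimes A)W^{(n)}(t)\bigr),
\]
with
\[
W^{(n)}(t)=v_nU(t_n,t_{n-1})v_n^*\cdots v_1U(t_1,0)v_1^*
\]
the bang-bang decoupled propagator on $\H_e'\otimes\H_0$, i.e.\ the time-ordered exponential of the toggled Hamiltonian $\tilde{H}(s)=v_{j(s)}H_I(s)v_{j(s)}^*$. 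The singular clock drops out entirely, and the problem reduces to showing $\|W^{(n)}(t)-\bar{U}(t)\|\to 0$ uniformly over decoupling cycles of length $n$, where $\bar{U}(t)$ is the propagator of the averaged Hamiltonian $\bar{H}(s)=\tfrac{1}{d_0}\operatorname{tr}_{\H_0}(H_I(s))\otimes\unit_{\H_0}$. Since $\bar{H}$ acts trivially on $\H_0$, $\bar{U}(t)$ commutes with $\unit\otimes A$ and the partial trace against $\omega_e'$ returns $\varrho(A)$, yielding~\eqref{eq:toprove}.

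For the Trotter-type convergence $W^{(n)}\to\bar{U}$ I would argue cycle by cycle: by Definition~\ref{def:decset}, the average of $\tilde{H}$ over one decoupling cycle of duration $|V|\Delta t$ equals $\bar{H}$ up to a residual controlled by the modulus of continuity of $H_I$ on the cycle; summing first-order Dyson increments across cycles reproduces $-\rmi\int_0^t\bar{H}(s)\,\d s+o(1)$, and higher-order terms are dominated by the unitarity of each $U(t_j,t_{j-1})$ together with $\int_0^t\|H_I(s)\|\,\d s<\infty$. The main obstacle is the singularity of $H_I$ at $s=0$---as illustrated in~\eqref{eq:H12} by $\|H_I(s)\|\sim|s|^{-1/2}$---which rules out uniform norm bounds and blocks a naive BCH expansion. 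I would handle this by treating the singular interval $[0,\Delta t]$ separately: there both $v_1U(\Delta t,0)v_1^*$ and $\bar{U}(\Delta t)$ tend to $\unit$ as $n\to\infty$ by the strong continuity of $t\mapsto U(t)$ at $0$ built into the Stinespring construction of~\cite{DMB}; on $[\Delta t,t]$ the generator $H_I$ is uniformly bounded, so the standard Trotter estimate applies. Combining both regions yields the required convergence.
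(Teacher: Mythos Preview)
Your overall architecture matches the paper's: build the finite-dimensional time-dependent Stinespring dilation, promote it to an autonomous group via Howland's clock, pair with the point-localised state $\nu_0$, and then observe that the clock reduction sends the bang-bang problem back to the finite-dimensional propagator $W^{(n)}(t)$ versus the averaged propagator $\bar U(t)$. All of that is correct and essentially identical to Steps~1--2 and the first half of Step~3 in the paper.

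The gap is in your final convergence argument. You isolate the single interval $[0,\Delta t]$ with $\Delta t=t/n$ and then claim that on $[\Delta t,t]$ ``the generator $H_I$ is uniformly bounded, so the standard Trotter estimate applies.'' But that uniform bound is $\sup_{[\Delta t,t]}\|H_I\|\sim\Delta t^{-1/2}=(n/t)^{1/2}$, which blows up with $n$. Feeding this into the usual DD/Trotter error of order $\|H_I\|_\infty^2\,t^2/n$ produces a constant $\sim t$, not something that vanishes, so the argument as written does not close. The easy repair is to decouple the two scales: fix a small $\delta>0$ \emph{independent of} $n$, bound the contribution of $[0,\delta]$ for both $W^{(n)}$ and $\bar U$ by $2\int_0^\delta\|H_I(s)\|\,\d s=O(\delta^{1/2})$ (using only unitarity and local integrability, no cancellation needed), and then run the bounded-Hamiltonian Trotter estimate on $[\delta,t]$, where $\|H_I\|\le C_\delta$ is now $n$-independent; choose first $\delta$, then $n$.

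The paper avoids this issue by a different regularisation: it approximates $H$ by a piecewise-constant $H_m$ on $(0,t]$ with $\int_0^t\|H-H_m\|<\eps/6$, which is possible precisely because of the $L^1$-bound~\eqref{eq:H-int}, then applies the known finite-dimensional DD result on each constant piece, and controls the replacements $U^{(n)}\leftrightarrow U_m^{(n)}$ and $U_{m,e}\leftrightarrow U_e$ via the Dyson-type inequality $\|U-\tilde U\|\le\int\|H-\tilde H\|$. This cleanly separates the singularity (absorbed into the $L^1$-approximation) from the averaging step (done on bounded pieces), and is what yields the explicit rates in Corollary~\ref{cor}.
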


\begin{proof}
The proof is long, so let us first summarise the main ideas. We start by constructing a finite-dimensional unitary dilation with time-dependent Hamiltonian: more precisely, we consider the Choi map of $\Phi_t^*$ and use its eigenvalues and eigenvectors to construct an explicit form of Kraus operators for $\Phi_t^*$. Using complex analysis, linear algebra and analyticity properties of the Choi map, we derive a number of nice properties of this dilation, which in Step 2 allow us to dilate it to a larger infinite-dimensional space with constant Hamiltonian; the advantage of a constant Hamiltonian is that the framework of dynamical decoupling is likely to work. In fact, in the final step we compute the perturbed dynamics in the presence of dynamical decoupling operations and prove that dynamical decoupling does indeed work. Let us look at the details:

\emph{1. Let us construct a finite-dimensional unitary dilation with time-dependent Hamiltonian.}

This is a modification of the argument in~\cite{DMB}. The time evolution of a density matrix $\rho$ corresponding to the UCP time evolution $\Phi_t$ is CPTP and given by $\Phi_t^*$; it can be written as
\begin{equation}\label{eq:eps-Kraus}
\Phi_t^* (\rho) = \sum_{k=1}^{d_1} M_k(t) \rho M_k(t)^* \quad \forall t\ge 0,
\end{equation}
where $d_1\in\N$ and $M_k(t)\in B(\H_0)$ are certain Kraus operators. More precisely,
let $\Omega$ denote the maximally entangled unit vector in $\H_0\otimes\H_0$ given by $\Omega = \frac{1}{\sqrt{d_0}} \sum_{j=1}^{d_0} u_j \otimes u_j$, and consider the Choi map $t\mapsto (\Phi_t^* \otimes \id) |\Omega\rangle\langle\Omega | \in B(\H_0\otimes\H_0)$; it is positive and analytic for all $t\in [0,\infty)$, 
and hence its eigenvalues $\lambda_k(t)$, not identically 0, and eigenvectors $v_k(t)$, where $k=1,\ldots,d_1$ and $d_1\le d_0^2$, are analytic 
for $t\ge 0$, 
cf.~\cite[Thm.II.6.1]{Kato}. Moreover, positivity of the Choi map implies non-negativity of all $\lambda_k(t)$, namely $\lambda_k(t)\in[0,1]$ not identically 0, and $\sum_k \lambda_k(t)=1$. Then one defines
\[
M_k(t) 
= \sqrt{d_0} \sum_{i,j=1}^{d_0} \sqrt{\lambda_k(t)} \langle u_i\otimes u_j ,  v_k(t) \rangle |u_i\rangle\langle u_j| ,
\]
with $\{u_j\}$ an orthonormal basis of $\H_0$, and one can hence show that~\eqref{eq:eps-Kraus} holds.
Now the square-root is analytic on $\C\setminus (-\infty,0]$, hence $M_k(t)$ is analytic at $t\in (0,\infty)$ if $\lambda_k(t)>0$.

At $t=0$, since $\Phi_0=\id$, one has $(\Phi_0^* \otimes \id) |\Omega\rangle\langle\Omega |  = |\Omega\rangle\langle\Omega |$, whence one eigenvalue is 1, say $\lambda_1(0)=1$ (with $v_1(0)=\Omega$) and all the rest are 0, namely $\lambda_k(0)= 0$ for $k\geq 2$. Therefore, $\sqrt{\lambda_1(t)} = 1 + t \lambda'_1(0)/2 + O(t^2)$, with  $\lambda'_1(0)\leq0$, is analytic for $t\geq0$ in a neighbourhood of $0$.
On the other hand, for $k\geq2$, $\sqrt{\lambda_k(t)} = t^{1/2} \lambda'_k(0)^{1/2} (1 + O(t))$, with  $\lambda'_k(0)\geq0$, is continuous for $t\geq 0$ and analytic for $t>0$ in a neighbourhood of $0$, but in general is \emph{not} analytic at $0$, not even differentiable, its derivative diverging as $t^{-1/2}$. 

It is so, unless $\lambda'_k(0)=0$, in which case one can choose as square-root an analytic function at~$0$: $\sqrt{\lambda_k(t)} = t (\lambda''_k(0)/2)^{1/2} + O(t^2)$ with $\lambda''_k(0)\geq0$. This is always the case for positive times. Indeed, if for some $k\geq 1$ one has $\lambda_k(t_0)=0$ for $t_0>0$ then also $\lambda_k'(t_0)=0$ and one can choose as square root in the neighborhood of $t_0$ the analytic function $\sqrt{\lambda_k(t)} = (-1)^{q_k} (t_0-t) (\lambda''_k(0)/2)^{1/2} + O((t-t_0)^2)$, where $q_k$ is the number of zeroes of $\lambda_k$ in~$(0,t_0)$.

Hence we can henceforth assume $M_1(t)$ to be analytic for $t\geq0$, with $M_1(t) = \unit -t X + O(t^2)$ as $t\downarrow 0$, and $M_k(t)$, for $k\geq2$, to be continuous at all $t\ge 0$ and analytic at $t>0$, with $M_k(t) \sim t^{1/2} L_k$ as $t\downarrow 0$, and some $X,L_k\in B(\H_0)$.

Choosing $\H_1=\C^{d_1}$, $\{e_j\}$ the canonical basis, 
 and $\mu = \langle e_1, \cdot\, e_1 \rangle$  as an initial pure bath state, we define the time propagator $U(t,0)$ on $\H=\H_1\otimes\H_0$ by the equation
\begin{equation}
U(t,0) = \sum_{j,k=1}^{d_1} | e_k\rangle\langle e_j| \otimes M_{k,j}(t), 
\label{eq:U(t,0)}
\end{equation}
with
\[
\sum_{k=1}^{d_1} M_{k,j}(t)^* M_{k,\ell}(t) = \sum_{k=1}^{d_1} M_{j,k}(t) M_{\ell,k}(t)^* = \delta_{j,\ell} \unit,
 \quad  1\leq j, \ell \leq d_1.
\]
Setting $M_{k,1}(t) = M_k(t)$ for $1\leq k\leq d_1$  fixes $d_0$ columns of $U(t,0)$ (i.e. $d_0 d_1$-dimensional vectors in $\H$); indeed these $d_0$ columns turn out to be normalised and mutually orthogonal because $\Phi_t^*$ is trace-preserving. Moreover, by construction, these columns are continuous at $t\ge 0$ and analytic 
at $t>0$.

Next, we complete these $d_0$ vector fields to $d_0d_1$ vector fields forming an orthonormal basis of $\H$, at all $t\geq 0$. In practice, this can be achieved as follows. We start with $d_0 (d_1-1)$ constant vector fields such that together with the given $d_0$ ones they form a basis of $\H$ on a certain interval $[0,t_1]$. Then we choose another  $d_0 (d_1-1)$ constant vector fields such that together with the given $d_0$ ones they form a basis of $\H$ on a certain interval $[t_1,t_2]$, and so on. 
Now we make the transitions at $t_k$ smooth, 
this way obtaining $d_0d_1$ smooth 
vector fields such that at all $t$ they form a basis of $\H$. By continuity and compactness arguments one needs only  finitely many such $t_k$ between $0$ and any $t$. Finally, we can apply Gram-Schmidt in order to obtain a 
smooth orthonormal frame between $0$ and $t$, cf.~\cite[Sec.8]{Lee}. We then define the columns of $U(t,0)$ as this orthonormal frame, so $U(t,0)$ is completely defined (although the choice was not unique) and unitary and it is continuous at $t\ge 0$ and
smooth at $t>0$. 

We then obtain a unitary dilation $\bigl(B(\H_1)\otimes B(\H_0), \mu, U(t,0))$ of $(B(\H_0), \Phi_t\bigr)$ on the finite dimensional product space $\H=\H_1\otimes H_0$, namely
\begin{equation}
\label{eq:dil1}
\varrho \circ\Phi_t(A) = (\mu \otimes \varrho )\circ \Ad (U(t,0)^*)(\unit\otimes A),
\end{equation}
for all $A\in B(\H_0)$ and $t\ge 0$, and with the initial bath state $\mu=\langle e_1, \cdot e_1 \rangle$. The dilation is non-autonomous, since  $(\Ad U(t,0))_{t\ge 0}$ does not usually form a semigroup.

Since $U(t,0)$ is
smooth on $(0,\infty)$, we can define
\begin{equation}
H(t) = \rmi\frac{\rmd}{\rmd t} U(t,0) \, U(t,0)^*\quad \forall t\in(0,\infty),
\label{eq:H(t)}
\end{equation}
which must also be
smooth, while $H(t)$ may diverge like $H(t) = O(t^{-1/2})$ as $t\downarrow 0$, so that
\begin{equation}\label{eq:H-int}
\int_0^t \|H(x)\|\rmd x  = O(t^{1/2}), 
\quad t\downarrow 0.
\end{equation}

We extend the function $U(\cdot,0)$ from $[0,\infty)$ to all of $\R\times\R$, by setting
\[
U(t,0)=U(-t,0)^*, \quad \forall t\le 0,
\]
and
\[
U(0,t)=U(t,0)^* \quad \forall t\in\R
\]
and
\[
U(t,s)= U(t,0) U(0,s) \quad \forall s,t\in\R.
\]
This way, $U$ is well-defined. Moreover, since $M_k(t)$ are continuous for $t\ge 0$ and $\Phi_0=\id$, we get the continuity of $U(t,0)$, and also $U(0,0)=\unit$, and by extension continuity in both variables and hence a unitary propagator on $\H$ in the sense of~\cite[p.282]{Reed2}. Since $\H$ is finite-dimensional, the continuity is uniform on compact intervals. Apart from $\{0\}\times\R\cup \R\times\{0\}$, $U$ is actually smooth. 

\emph{2. We construct an autonomous unitary dilation.}

The idea is based on~\cite[p.291]{Reed2} and~\cite{How}. Namely, we consider an (infinite-dimensional) second bath $\K=L^2(\R)$ and $\hat{\H}=\K\otimes\H$, and define, for all $\xi\in \hat{\H} \simeq L^2 (\R , \H)$,
\begin{equation}\label{eq:Howland}
\big(\hat{U}(t) \xi\big) (x) = U(x,x-t) \xi(x-t) \quad \forall x,t\in\R.
\end{equation}
Then $\{\hat{U}(t)\}_{t\in\R}$ forms a strongly continuous one-parameter group of unitaries on $\hat{\H}$; in particular, it has an infinitesimal generator which may be interpreted as a constant though unbounded Hamiltonian. $\K$ is the Hilbert space of the clock from Section~\ref{sec:singular}. We choose here to follow Option~\ref{sub:1}, so our algebra is $\A=C_b(\R) \subset B(\K)$, and as initial state we consider $\nu_0$ (at $t=0$ the clock is in a point localised state at $x=0$). Then we have an autonomous unitary dilation 
$$
\bigl(\B\otimes B(\H_0), \omega_e,\hat{U}(t)\bigr) := \bigl(\A\otimes B(\H_1) \otimes B(\H_0), \nu_0 \otimes \mu, \hat{U}(t)\bigr)
$$ 
of $(B(\H_0),\Phi_t)$. 
To see that we have actually got a dilation, let us compute:
\begin{align*}
(\nu_0\otimes\mu\otimes\varrho ) & \big(\Ad (\hat{U}(t)^*)(\unit\otimes\unit\otimes A) \big) 
= (\nu_0\otimes\mu\otimes\varrho ) \big(\hat{U}(-t)(\unit\otimes\unit\otimes A)\hat{U}(t) \big)\\
=& (\mu\otimes\varrho ) \big(U(s,s+t)(\unit\otimes A)U(s+t,s+t-t) \big) \restriction_{s=0}\\
=& (\mu\otimes\varrho ) \big(U(0,t)(\unit\otimes A)U(t,0) \big) \restriction_{s=0}\\
=& \varrho ( \Phi_t(A)),
\end{align*}
for every $t\in [0,\infty)$, $A\in B(\H_0)$ and every state $\varrho $ on $B(\H_0)$, where the last line follows from~\eqref{eq:dil1}.

\emph{3. We compute the time evolution with dynamical decoupling and check whether it works.}

We let the ``averaged bath Hamiltonian'' be defined by
\[
H_e(s) :=\frac{1}{|V|}\sum_{v\in V} (\unit \otimes v) H(s) (\unit\otimes v^*) \in B(\H_1)\otimes\unit_{\H_0} \quad \forall s\in\R\setminus \{0\} 
\]
as in~\cite[Sec.1]{ABFH}, and denote the corresponding continuous time-evolution propagators by $U_e(t,s)$ on $\H$, and $\hat{U}_e(t)$ the corresponding one-parameter group on $\hat{\H}$ in analogy with~\eqref{eq:Howland}. Notice that $U_e(t,s)$ acts trivially on $B(\H_0)$ and in particular
\begin{equation}\label{eq:E1dilation}
(\mu\otimes\varrho )\circ \Ad (U_e(t,s)^*)(\unit\otimes\cdot) = \varrho .
\end{equation}

Given $t>0$ and $\eps>0$, we have to find a decoupling cycle $(v_1,\ldots,v_n)$ of a certain length $n\in |V|\N$ such that \eqref{eq:toprove} holds. We compute

\begin{equation*}
\begin{aligned}
\Big(\hat{U}^{(n)}(t)^* \xi\Big)(s) 
=& \Big(\hat{U}( \hat{v}_1\hat{U}(-\frac{t}{n}) \hat{v}_1^*\hat{v}_2\hat{U}(-\frac{t}{n})\hat{v}_2^* \cdots \hat{v}_n\hat{U}(-\frac{t}{n}) \hat{v}_n^* \xi\Big) (s)\\
=& \Big( \hat{v}_1 U(s,s+\frac{t}{n}) \hat{v}_1^* \cdots \hat{v}_n U(s+\frac{(n-1)t}{n},s+t) \hat{v}_n^*\Big) \xi(s+t) \\
=& U^{(n)}(s,s+t)\xi(s+t),
\end{aligned}
\end{equation*}
where
\[
U^{(n)}(s,s+t)= \hat{v}_1 U(s,s+\frac{t}{n}) \hat{v}_1^* \cdots \hat{v}_n U(s+\frac{(n-1)t}{n},s+t) \hat{v}_n^*.
\]
Thus
\begin{equation}\label{eq:claim-Un3}
\begin{aligned}
\| (\omega_e\otimes\varrho )\circ &  \Ad(\hat{U}^{(n)}(t)^*)(\unit\otimes\cdot) - \varrho  \|\\
=&\|(\nu_0\otimes\mu\otimes\varrho ) \circ\Ad\big(\hat{U}^{(n)}(t)^*\big)(\unit\otimes\unit\otimes \cdot)\\
&\quad - (\nu_0\otimes\mu\otimes\varrho ) \circ\Ad\big(\hat{U}_e(t)^*\big)(\unit\otimes\unit\otimes \cdot)
\|\\
=& \| (\mu\otimes\varrho )\circ \Ad\big( U^{(n)}(0, t)\big)(\unit\otimes\cdot) - (\mu\otimes\varrho )\circ \Ad\big( U_e(0, t)\big)(\unit\otimes\cdot)\|\\
\le& 2 \|U^{(n)}(0, t)- U_e(0, t)\|.
\end{aligned}
\end{equation}
Thus in order to complete the proof of \eqref{eq:toprove}, it remains to show that if we choose $n\in\N$ big enough then \eqref{eq:claim-Un3} is less than $\eps$, or in other words, that
\begin{equation}\label{eq:claim-Un4}
\| U^{(n)}(0,t)- U_e(0,t)\| < \frac{\eps}{2}.
\end{equation}

Since~\cite{ABFH} works in the framework of constant Hamiltonian whereas here $H(t)$ is not constant, we would like to relate the two frameworks. Choose $m\in\N$ and a piece-wise constant (step function) approximation $H_m$ of $H$ with $m$ equal length steps on $(0,t]$ such that
\begin{equation}\label{eq:eps6}
\int_{0}^{t} \| H(x)-H_m(x)\|\rmd x
<\frac{\eps}{6},\quad 
\int_{0}^{t} \| H_{m,e}(x)-H_{e}(x)\|\rmd x
<\frac{\eps}{6},
\end{equation}
where $H_{m,e}$ is the averaged bath Hamiltonian resulting from $H_m$ (in the same way as $H_e$ arises from $H$).
This is possible because of uniform continuity of $H$ and $H_e$ on compact subintervals of $(0,t]$ and because of \eqref{eq:H-int} applied to $H$ as well as to $H_e$. We denote the corresponding time evolution unitary families by $U_m$ and $U_{m,e}$.

Now for each of the $m$ steps we apply dynamical decoupling; more precisely, we apply the same decoupling cycles of length $q$ such that, with $n:= m q$, the resulting length of the total decoupling cycle, we have:
\[
\|U_m^{(n)}(0,t)-U_{m,e}(0,t)\|<\frac{\eps}{6}.
\]
This can always be achieved because dynamical decoupling always works on a finite-dimensional space $\H$ with constant Hamiltonian~\cite[proof of Thm.3.1]{ABFH}, and hence also for a piece-wise constant Hamiltonian with a fixed finite number of steps $m$.

Moreover, it follows from~\cite[Eq.(3)]{DMB} (see also~\cite{NDGD}) together with \eqref{eq:eps6} that
\begin{align*}
\| U^{(n)}(0,t)-U_m^{(n)}(0,t)\| \le& \int_{0}^{t} \| H(x)-H_m(x)\|\rmd x
<\frac{\eps}{6}
\end{align*}
and that
\[
\| U_{m,e}(0,t)-U_e(0,t)\| \le \int_{0}^{t} \| H_{m,e}(x)-H_e(x)\|\rmd x
<\frac{\eps}{6}.
\]

Putting these three inequalities together using an $\eps/3$-argument, we get
\begin{align*}
\|U^{(n)}&(t_0,t)-  U_{e}(0,t)\| 
\le  \|U^{(n)}(0,t)-U_m^{(n)}(0,t)\| \\
& + \|U_m^{(n)}(0,t)-U_{m,e}(0,t)\|
+  \|U_{m,e}(0,t)-U_{e}(0,t)\| \\
<& 3\frac{\eps}{6}=\frac{\eps}{2},
\end{align*}
proving~\eqref{eq:claim-Un4}, and hence verifying~\eqref{eq:toprove} for the present dilation.
\end{proof}

\begin{corollary}\label{cor}
Let $\mathcal{L}= \Phi'_0$, let $H(t)$ be the time-dependent Hamiltonian~\eqref{eq:H(t)} of the first unitary dilation,  and let $E_n(\varrho)$ be  the decoupling error~\eqref{eq:toprove} for the autonomous unitary dilation constructed above. Then  $H(t)$ is uniformly bounded and smooth iff $\mathcal{L} = \rmi [K,\cdot]$ for some $K=K^*$. In such a case $E_n(\varrho) = O(n^{-1})$ as $n\to\infty$. Otherwise, $H(t)$ is unbounded at $0$, $H(t) = t^{-1/2} H_0 + O(t)$ as $t\downarrow 0$, for some $H_0=H_0^*$, and $E_n(\varrho) = O(n^{-1/2})$ as $n\to\infty$.
\end{corollary}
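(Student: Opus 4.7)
The plan is to trace through the construction of $U(t,0)$ and $H(t)$ in the proof of Theorem~\ref{th:decoupling} to characterise their regularity at $t=0$ in terms of $\mathcal{L}$, and then estimate the decoupling error via $\int_0^t \|H(x)\|\rmd x$.

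First, I would analyse when $H(t)$ is bounded and smooth near $0$. From the construction, $M_1(t) = \unit - tX + O(t^2)$ is analytic at $0$, while for $k \geq 2$ one has $M_k(t) = t^{1/2} L_k + o(t^{1/2})$ with $L_k$ determined by $\lambda_k'(0)$ and $v_k(0)$. Since the analytic frame completion for the remaining columns can be chosen smooth from the outset, $H(t)$ is bounded and smooth at $0$ if and only if every $M_k$ is smooth at $0$, which in view of the $\sqrt{t}$ expansion is equivalent to $\lambda_k'(0)=0$ (hence $L_k=0$) for all $k\ge 2$. Differentiating~\eqref{eq:eps-Kraus} at $t=0$ yields
\[
\mathcal{L}^*(\rho) = -X\rho - \rho X^* + \sum_{k\ge 2} L_k \rho L_k^*,
\]
since the $k\ge 2$ contribution comes from the square of $\sqrt{\lambda_k(t)}$. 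Trace preservation ($\tr \mathcal{L}^*(\rho) = 0$ for all $\rho$) forces $X + X^* = \sum_{k\ge 2} L_k^* L_k$, so if all $L_k$ vanish then $X = -X^* = \rmi K$ with $K=K^*$, giving $\mathcal{L}(A) = \rmi[K,A]$. For the converse, if $\mathcal{L} = \rmi[K,\cdot]$ I would compute that $(\mathcal{L}^*\otimes\id)|\Omega\rangle\langle\Omega| = -\rmi[K\otimes\unit,|\Omega\rangle\langle\Omega|]$ has trivial projection onto the $0$-eigenspace of $|\Omega\rangle\langle\Omega|$, so first-order analytic perturbation theory~\cite[Thm.II.6.1]{Kato} gives $\lambda_k'(0) = 0$ for all $k\ge 2$.

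Next, in the non-Hamiltonian case I would identify the leading singularity: combining the $\sqrt{t}$ expansion of the first $d_0$ columns of $U(t,0)$ (the ones coming from the $M_k$) with the smoothly completed remaining columns yields $U(t,0) = \unit - \rmi t^{1/2} T_0 + O(t)$ as $t\downarrow 0$ for some self-adjoint $T_0$ built from the $L_k$, so differentiating~\eqref{eq:H(t)} gives $H(t) = t^{-1/2} H_0 + O(t)$ with $H_0 = T_0/2 = H_0^*$, nonzero precisely when some $L_k \neq 0$.

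Finally, the decoupling error is bounded by~\eqref{eq:claim-Un3} as $2\|U^{(n)}(0,t) - U_e(0,t)\|$. In the bounded smooth case, $H$ is Lipschitz on $[0,t]$ and the standard first-order decoupling/Trotter estimate (as in~\cite[proof of Thm.3.1]{ABFH}) applied to the $n$ equal steps of length $t/n$ yields $O(n^{-1})$. In the singular case, the integral bound~\cite[Eq.(3)]{DMB} on the first subinterval $[0,t/n]$ contributes $\int_0^{t/n}\|H(x)\|\rmd x = O(n^{-1/2})$ (dominated by the $t^{-1/2}H_0$ term), while the bounded-Hamiltonian estimate on $[t/n,t]$ contributes $O(n^{-1})$; the $\sqrt{t}$ singularity at $0$ thus dominates and gives $E_n(\varrho) = O(n^{-1/2})$.

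The main technical obstacle I anticipate is cleanly justifying that the non-uniqueness of the Gram--Schmidt completion of the first $d_0$ columns to an orthonormal frame does not spoil the above characterisation. The key observation is that only those first columns carry the $\sqrt{t}$ non-smoothness, while the remaining columns can be chosen smooth, so the leading singular behaviour of $U(t,0)$ and hence of $H(t)$ is forced by the $M_k$ alone, independently of the completion chosen.
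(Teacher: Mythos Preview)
Your characterisation of when $H(t)$ is bounded and smooth follows the paper's line closely (derive the GKLS form of $\mathcal{L}^*$ from the $M_k$ expansions and read off the equivalence; your perturbation-theoretic converse is a valid alternative to the paper's implicit use of the structure of the GKLS representation). The real gap is in your singular-case error estimate. You claim the ``bounded-Hamiltonian estimate on $[t/n,t]$ contributes $O(n^{-1})$'', but on that interval $\sup\|H\| \sim (t/n)^{-1/2} = O(\sqrt{n})$, so the standard constant-Hamiltonian decoupling bound of order $t^2\|H\|^2/n$ gives only $O(1)$, not $O(n^{-1})$. The singularity is not confined to the first subinterval: $\|H(t_k)\|$ remains large for many subsequent steps, and peeling off step $k=1$ also breaks the decoupling-cycle structure needed for the averaging to take effect on the remainder.

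The paper avoids this by retaining the full $\eps/3$ architecture from the proof of Theorem~\ref{th:decoupling}: approximate $H$ by a piecewise-constant $H_m$ on $m$ equal steps; the approximation integrals $\int_0^t\|H-H_m\|$ and $\int_0^t\|H_{m,e}-H_e\|$ are $O(m^{-1/2})$ because of the $t^{-1/2}$ singularity, while the middle term $\|U_m^{(n)}-U_{m,e}\|$ is bounded via the second-order Dyson expansion by $\tfrac{t^2}{m^2}\sum_{k=1}^m\|H(t_k)\|^2$. Since $\|H(t_k)\|^2 = O(m/(kt))$, this sum is only $O(\log m/m)$, and the dominant $O(m^{-1/2})$ from the approximation integrals then yields $E_n = O(n^{-1/2})$. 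Your splitting idea could be salvaged by replacing the supremum bound on $[t/n,t]$ with this step-by-step sum-of-squares estimate, but as written the argument does not close.
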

\begin{proof}
In Step 1 of the proof of Theorem~\ref{th:decoupling}, we have shown that $M_1(t)$ is analytic for $t\geq0$, with $M_1(t) = \unit -t X + O(t^2)$ as $t\downarrow 0$, and $M_k(t)$, for $k\geq2$, is continuous at all $t\ge 0$ and analytic at $t>0$, with $M_k(t) = t^{1/2} L_k + O(t)$ as $t\downarrow 0$. 
By plugging them into the Kraus decomposition~\eqref{eq:eps-Kraus} one has
\[
\Phi_t^*(\rho) = \rho -t (X\rho + \rho X^*) + t \sum_{k=2}^{d_1} L_k \rho L_k^* + O(t^2),
\]
and thus the derivative of the time evolution at $t=0$ reads ${\Phi_0^*}'=\mathcal{L^*}$, with
\[
\mathcal{L^*}(\rho) = -\rmi [K,\rho] -\frac{1}{2} \sum_{k=2}^{d_1} \bigl(  \{L_k^* L_k, \rho \} -2 L_k\rho L_k^*\bigr)
\]
having the GKLS form~\cite{GKS,Lindblad}. Here $K= \operatorname{Im} X = (X-X^*)/(2\rmi)$ and $\sum_{k=2}^{d_1} L_k^* L_k = \operatorname{Re} X = (X+X^*)/2$ by trace preservation. 

The operators $M_k(t)$, for $k\geq2$, are analytic at $0$ iff $L_k=0$ for all $k\geq 2$, that is iff $\mathcal{L^*}(\rho) = -\rmi [K,\rho]$. In such a case the unitary propagator $U(t,0)$ in~\eqref{eq:U(t,0)} is smooth for $t\geq0$ and uniformly continuous, and so is the Hamiltonian $H(t) = \rmi\frac{\rmd}{\rmd t} U(0,t) \, U(t,0)^*$.  For fixed $t\ge 0$, set  $t_k=\frac{k t}{m}$ and choose the stepwise approximation $H_m$ such that $H_m(s) = H(t_k)$ for $s\in [t_{k-1}, t_k)$, with $k=1,\dots,m$. We therefore get
\[
\int_{0}^{t} \| H(x)-H_m(x)\|\rmd x = O\Bigl(\frac{1}{m}\Bigr),\quad 
\int_{0}^{t} \| H_{m,e}(x)-H_{e}(x)\|\rmd x  = O\Bigl(\frac{1}{m}\Bigr).
\]
By using the second iteration of the equation for the propagator $U_m^{(q)}$ 
\begin{align*}
U_m^{(q)}(t_{k-1},t_k) = & \unit +\rmi \int_{t_{k-1}}^{t_k}  U_m^{(q)}(t_{k-1},s) H_m^{(q)}(s) \rmd s
\\
= & \unit +\rmi \int_{t_{k-1}}^{t_k} H_m^{(q)}(s) \rmd s - \int_{t_{k-1}}^{t_k}   \int_{t_{k-1}}^s U_m^{(q)}(t_{k-1},r)  H_m^{(q)}(r)  H_m^{(q)}(s) \rmd r \rmd s
\end{align*}
and analogously for $U_{m,e}$, one has
\begin{equation}\label{eq:Um-estimate}
\begin{aligned}
\|U_m^{(n)}(0,t)-  U_{m,e}(0,t)\| \le & \sum_{k=1}^m \|U_m^{(q)}(t_{k-1},t_k)-U_{m,e}(t_{k-1},t_k)\| \\
= &  \sum_{k=1}^m \Big\| \int_{t_{k-1}}^{t_k} \!\! \int_{t_{k-1}}^s \Bigl[ U_m^{(q)}(t_{k-1},r) 
H_m^{(q)}(r)  H_m^{(q)}(s) 
\\
& \qquad\qquad\qquad\qquad\quad - U_{m,e}(t_{k-1},r)  H_{m,e}(r)  H_{m,e}(s)
\Bigr] \rmd r \rmd s \Big\| 
\\
\le &  \sum_{k=1}^m  \int_{t_{k-1}}^{t_k} \int_{t_{k-1}}^s \bigl(
\| H_m^{(q)}(r)\|  \|H_m^{(q)}(s)\| + \|H_{m,e}(r)\|  \|H_{m,e}(s)\|
\bigr) \rmd r \rmd s
\\
\le &   \frac{t^2}{m^2} \sum_{k=1}^m   \| H(t_k) \|^2.
\end{aligned}
\end{equation}
In the second line we used the decoupling condition, namely $\frac{1}{|V|} \sum_{v\in V} \hat{v} H_m(s) \hat{v}^* = H_{m,e}(s)$, which implies that the terms of order 0 and 1 cancel with each other, hence the equality. In the fourth line we used $\|H_m^{(q)}(s)\| = \| H(t_k)\|$ and $\|H_{m,e}(s)\| = \| H_{e}(t_k)\| \leq \| H(t_k)\|$, for $s\in(t_{k-1}, t_k)$. Finally, by setting $\eta(t)= \max_{s\in[0,t]}\|H(s)\| <\infty$, one has
\[
\|U_m^{(n)}(0,t)-  U_{m,e}(0,t)\| \le  \frac{\eta^2 t^2}{m} = O\Bigl(\frac{1}{m}\Bigr),
\]
as $m\to\infty$. Since $n= m q$, by keeping the cycle length $q$ fixed, one gets using the $\eps/3$-argument as in Step 3 of the proof of Theorem \ref{th:decoupling} that $E_n(\rho) = O(1/m)= O(1/n)$ as $n\to\infty$.

On the other hand, we saw that if $L_k\neq 0$ for some $k\geq 2$ then $U(t,0) = \unit -\rmi t^{1/2} H_0 +O(t)$ and $H(t) = t^{-1/2} H_0 + O(t)$ as $t\downarrow0$, so that~\eqref{eq:H-int} holds. This, together with uniform continuity on compact subintervals of $(0,t]$, implies
\[
\int_{0}^{t} \| H(x)-H_m(x)\|\rmd x = O\Bigl( \sqrt{\frac{1}{m}}\Bigr)
,\quad 
\int_{0}^{t} \| H_{m,e}(x)-H_{e}(x)\|\rmd x  = O\Bigl( \sqrt{\frac{1}{m}}\Bigr),
\]
for fixed $t\ge 0$ as $m\ra\infty$. 
Moreover, by setting $\tilde{\eta}(t)= \max_{s\in[0,t]}s \|H(s)\|^2<\infty$, one has from~\eqref{eq:Um-estimate} 
\begin{align*}
\|U_m^{(n)}(0,t)-U_{m,e}(0,t)\|  \le   \frac{t^2}{m^2} \sum_{k=1}^m   \frac{\tilde{\eta}}{t_k} = \frac{\tilde{\eta} t}{m} \sum_{k=1}^m   \frac{1}{k} 
\leq  \frac{\tilde{\eta} t }{m}  (1+\log m) =  O\Bigl( \frac{\log m}{m}\Bigr),
\end{align*}

as $m\to\infty$. Therefore, the $\eps/3$-argument now yields $E_n(\varrho) =  O(m^{-1/2})= O(n^{-1/2})$ as $n=mq \to\infty$ with fixed $q$.
\end{proof}

An immediate consequence of Corollary \ref{cor} is that for a quantum dynamical semigroup, $\Phi_t = \rme^{t \mathcal{L}}$, the time-dependent Hamiltonian is singular at $t=0$ and the decoupling error is of order $n^{-1/2}$, unless the evolution is unitary, $\Phi_t = \Ad (\rme^{\rmi t K})$, in which case the decoupling error is of order $n^{-1}$. Indeed, in the numerical simulation of the single-qubit amplitude damping channel shown in Fig.~\ref{simulation}  the decoupling error decays  as $(t/\Delta t)^{-1/2}$.

\begin{corollary}\label{cor2}
Every quantum dynamical semigroup on a finite-dimensional quantum system has got an autonomous unitary dilation for which dynamical decoupling works and another autonomous unitary dilation for which dynamical decoupling does not work.
\end{corollary}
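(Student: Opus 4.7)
My plan is to verify the two existence claims separately; both are essentially pointers to results already in hand.

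For the first claim—existence of an autonomous unitary dilation on which dynamical decoupling works—I would apply Theorem~\ref{th:decoupling} directly. All that needs checking is that a quantum dynamical semigroup $\Phi_t = \rme^{t\mathcal{L}}$ on $B(\H_0)$ with $\dim \H_0 < \infty$ falls under its hypotheses: since $\mathcal{L}$ is bounded on the finite-dimensional $B(\H_0)$, the map $t \mapsto \rme^{t\mathcal{L}}$ is entire in $t$ and hence analytic on $[0,\infty)$; each $\Phi_t$ is UCP by the GKLS form of $\mathcal{L}$; and $\Phi_0 = \id$ is immediate from the exponential series. Thus $\{\Phi_t\}_{t\ge 0}$ is an analytic time-dependent quantum channel, Theorem~\ref{th:decoupling} applies, and the first half of the corollary follows. (A decoupling set on $\H_0$ always exists, e.g.\ any unitary 1-design such as the Heisenberg-Weyl set, so the implicit decoupling-set hypothesis is free.)

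For the second claim, I would invoke the Gregoratti-Chebotarev dilation arising from quantum stochastic calculus, cf.~\cite{GoughG}. This provides, for any finite-dimensional GKLS semigroup, an autonomous unitary dilation with a time-independent Hamiltonian on the system tensored with a bosonic Fock space $\Gamma$, with the environment initialised in the normal Fock vacuum state. It fits naturally into the C*-algebraic framework of Definition~\ref{def:unidilation}: take $\B \subset B(\Gamma)$ and $\omega_e$ to be the vacuum state, so that no singular-state extension is needed here. Gough and Nurdin \cite{GN} have already established that dynamical decoupling fails on this dilation, supplying the required counter-dilation.

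The main obstacle I foresee is purely a bookkeeping one, namely translating the failure statement of~\cite{GN} into the precise negation of~\eqref{eq:toprove}, i.e.\ exhibiting $t>0$, $\eps>0$, and a state $\varrho$ for which $E_n(\varrho) \ge \eps$ along a subsequence of decoupling cycle lengths $n$ (for every cycle of that length). This is a routine unpacking of the two formalisms—no new dynamical argument is needed—but it should be spelled out carefully so that the notion of failure in \cite{GN} is matched to the conventions of Definitions~\ref{def:unidilation} and~\ref{def:decset} used in the present paper.
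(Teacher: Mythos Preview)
Your proposal is correct and follows essentially the same approach as the paper: the paper's own proof is simply ``The first part is Theorem~\ref{th:decoupling}, the second part is the result in~\cite{GN}.'' Your additional verification that $\rme^{t\mathcal{L}}$ is an analytic quantum channel and your remarks on matching the failure statement of~\cite{GN} to the present definitions are reasonable elaborations, but the core argument is identical.
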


\begin{proof}
The first part is Theorem~\ref{th:decoupling}, the second part is the result in~\cite{GN}.
\end{proof}

\begin{remark} \label{remarkeuler}
\begin{itemize}
\item[(i)] The autonomous unitary dilation in the proof of Theorem \ref{th:decoupling} is actually strongly continuous. We decided for Option \ref{sub:1} for the sake of simplicity but Options \ref{sub:1a}, \ref{sub:2} or their extensions would work equally well. We could also use Option \ref{sub:3} but in that case the Hilbert space $\K$ would become nonseparable and strong continuity would be lost. Actually our choice shows that the clock can always be chosen to be \emph{classical}.
\item[(ii)] We worked exclusively in the context of bang-bang dynamical decoupling, i.e. with infinitely strong pulses. This is the standard framework of dynamical decoupling and the most elegant one to deal with mathematically although it is quite idealised and arguably unphysical. However, our construction should also work for most decoupling procedures with bounded decoupling operations given that \eqref{eq:Howland} reduces essentially everything to finite dimensions. Concretely, going through the proof of Theorem \ref{th:decoupling} one can immediately verify that Euler dynamical decoupling (cf.~\cite{VK03} and also~\cite[Sec.3.2]{ABH17}) works.
\item[(iii)] A natural question arising from Corollary~\ref{cor2} is: which dilation (if any) is the physically correct one? Is there a way of finding out? Or can dynamical decoupling actually be used to determine which dilation is the correct one, i.e., which environment is the actual physical environment? That would be an interesting variation of~\cite{AHFB15}. See also \cite{Wiseman}.
\item[(iv)] Another natural question of rather mathematical interest is: is there a more general method to dilate CPTP dynamics beyond the analytic case?
\end{itemize}
\end{remark}

\section{Conclusion}

We have developed a universal method to dilate a given analytical CPTP dynamics on a finite-dimensional quantum system to a time-independent Hamiltonian evolution on a larger space. This includes the important special case of Markovian (i.e. GKLS-generated) dynamics, for which the Gregoratti-Chebotarev dilation was developed. Common properties between our dilation and the Gregoratti-Chebotarev dilation are that, at least in the case where the original Markovian dynamics has population decay, both Hamiltonians are unbounded below and above as this is a necessary condition for any dilation of a Lindblad dynamics with population decay~\cite[eq.(21)]{JEK}.
While our dilation has a delta singularity in time in the state, Gregoratti-Chebotarev has a delta singularity in space in the Hamiltonian~\cite{GoughG}.
This duality is reflected in the response of the dilations to interventions, with Gregoratti-Chebotarev resistant to dynamical decoupling, whereas our dilation can be decoupled easily. This shows the strong influence of the choice of dilation for a given control situation. While in an ideal world, information of the actual environment determines such a choice, in practice often only reduced dynamics is known, meaning that a variety of environment dynamics could be considered.
In quantum optics, Gregoratti-Chebotarev has often been a natural choice, because this dilation arises naturally from the physical setup~\cite{Gardiner}, but in solid state systems this is less straight forward. As explained in the introduction, the fact that dynamical decoupling has been successfully used in a number of experimental realisations \cite{NISQ22} shows that Gregoratti-Chebotarev might not be a good choice in those settings. In this article we have taken the first steps into the broader world of dilations beyond Gregoratti-Chebotarev by providing a new alternative. Our dilation provides a mathematically easier setup and much more flexibility since it can describe any analytic (not necessarily Markovian) reduced dynamics. Moreover, the existence of a specific physical model \cite{Burgarth19} which shares features of both dilations shows the rich mathematical and physical structure of quantum noise. The above observation about the singularity is a good qualitative indication as to whether dynamical decoupling will work in a given model though a proper classification or a rigorous and easily applicable criterion is out of reach at the moment.

\subsection*{Acknowledgments}
DB acknowledges support by the Australian Research Council (project number FT190100106).
PF is partially supported by INFN through the project `QUANTUM', by the Italian National Group of Mathematical Physics (GNFM-INdAM), and by Regione Puglia and QuantERA ERA-NET Cofund in Quantum Technologies (GA No. 731473), project PACE-IN.
RH would like to thank Veronique Fischer for explanations about defect measures and other topics in harmonic analysis relevant to Option~\ref{sub:1a} in Section~\ref{sec:singular} and for pointing out to him reference~\cite{Cor}. 

\subsection*{Note added in proof} After the manuscript had been accepted, J. Kiukas made us aware of \cite{KSW}, where a construction is proposed that has a similar underlying idea as the one presented in our Section \ref{sec:singular}. While the two constructions are independent and different, the paper still ought to be cited.


\begin{thebibliography}{aa}

%
\bibitem{Accardi} L. Accardi, Y. G. Lu and I. Volovich. \emph{Quantum Theory and Its Stochastic Limit}. Springer (2010).

\bibitem{Alicki} R. Alicki, K. Lendi. \emph{Quantum Dynamical Semigroups and Applications}. Springer (2007).

\bibitem{ABFH} C. Arenz, D. Burgarth, P. Facchi, R. Hillier. Dynamical decoupling of unbounded Hamiltonians. \emph{J. Math. Phys.} 59, 032203 (2018).

\bibitem{AHFB15} C. Arenz, R. Hillier, M. Fraas and D. Burgarth. Distinguishing decoherence from alternative quantum theories by dynamical decoupling. \emph{Phys.
Rev. A} 92, 022102 (2015).

\bibitem{Attal1}S. Attal,  A.  Joye and C.-A. Pillet (Eds). \emph{Open Quantum Systems I: The Hamiltonian Approach}. LNM 1880. Springer (2006).

\bibitem{Attal2}S. Attal,  A.  Joye and C.-A. Pillet (Eds). \emph{Open Quantum Systems II: The Markovian Approach}. LNM 1991. Springer (2006).

\bibitem{Attal3}S. Attal,  A.  Joye and C.-A. Pillet (Eds). \emph{Open Quantum Systems III: Recent Developments}. LNM 1882. Springer (2006).
\bibitem{ABH17} C. Arenz, D. Burgarth and R. Hillier. Dynamical decoupling and homogenization of continuous variable systems. \emph{ J. Phys. A: Math. Theor.} 50, 135303 (2017).
%

\bibitem{JEK} M. Beau, J. Kiukas, I. L. Egusquiza, and A. del Campo. Nonexponential quantum decay under environmental decoherence. \emph{Phys. Rev. Lett.} 119, 130401 (2017).

\bibitem{Bengtsson} I. Bengtsson and K. Zyczkowski. \emph{Geometry of Quantum States}. Cambridge University Press (2017).

\bibitem{NISQ22} K. Bharti e.a. Noisy intermediate-scale quantum algorithms.
\emph{Rev. Mod. Phys.} 94, 015004 (2022).

\bibitem{BR1} O. Bratteli, D. Robinson. \emph{Operator algebras and quantum statistical mechanics}. Springer (1987).

\bibitem{Breuer} H.-P. Breuer and F. Petruccione. \emph{The Theory of Open Quantum Systems}. Oxford University Press (2007).

\bibitem{Burgarth19} D. Burgarth, P.Facchi, M. Fraas, and R. Hillier. Non-Markovian noise that cannot be dynamically decoupled. \emph{SciPost Phys.} 11, 027 (2021).

\bibitem{Class} D. Burgarth, P. Facchi, G. Garnero, H. Nakazato, S. Pascazio, K. Yuasa. Can decay be ascribed to classical noise? \emph{Open Systems \& Information Dynamics} 24, 1750001 (2017). 


\bibitem{Cor} H. Cordes. \emph{Elliptic pseudo-differential operators - an abstract theory}. LNM 756. Springer (1979).

\bibitem{Davies} E. B. Davies. \emph{Quantum theory of open systems}. Academic Press (1976).

\bibitem{DMB} B. Dive, F. Mintert and D. Burgarth. Quantum simulations of dissipative dynamics: Time dependence instead of size. \emph{Phys. Rev. A} 92, 032111 (2015).

\bibitem{Exnerbook} P. Exner. \emph{Open Quantum Systems and Feynman Integrals}, Springer, Dordrecht (1985).

\bibitem{Gardiner}C. W. Gardiner and P. Zoller. \emph{Quantum Noise}. Springer (2000).

\bibitem{GKS}
V. Gorini, A. Kossakowski, E. C. G. Sudarshan.
Completely positive dynamical semigroups of N-level systems.
\emph{J. Math. Phys.} 17, 821 (1976).

\bibitem{GoughG} J. Gough, The Global versus Local Hamiltonian Description of Quantum Input-Output Theory. \emph{Open Systems and Information Dynamics} 22, 1550009 (2015).


\bibitem{Gough} J. Gough and J. Kupsch. \emph{Quantum Fields and Processes: A Combinatorial Approach}. Cambridge University Press (2018).
\bibitem{GN} J. Gough and H. Nurdin. Can quantum Markov evolutions ever be dynamically decoupled? \emph{IEEE 56th Annual Conference on Decision and Control, CDC 2017}, 6155 - 6160 (2017).


\bibitem{How} J. Howland. Stationary scattering theory
for time-dependent Hamiltonians. \emph{Math. Ann.} 207, 315-335 (1974).

\bibitem{Jacobs} K. Jacobs. \emph{Quantum Measurement Theory and its Applications}. Cambridge University Press (2014).

\bibitem{Kato} T. Kato. \emph{Perturbation theory of linear operators.} Springer (1995). 

\bibitem{KSW} M. Keyl, D. Schlingemann,  R.F. Werner. Infinitely entangled states. \emph{Quantum Inf. Comput.} 3, 281-306 (2003)

\bibitem{La} K. Landsman. Bohrification -- from classical concepts to commutative algebras. arxiv:1601.02794.

\bibitem{Lee} J. Lee. \emph{Introduction to smooth manifolds}. GTM, Springer (2003).

\bibitem{WisemanM}L. Li, M. J. W. Hall, H. M. Wiseman. Concepts of quantum non-Markovianity: a hierarchy. \emph{Phys. Rep.} 759, 1-51 (2018).


\bibitem{LB13} D. A. Lidar and T. A. Brun.
\textit{Quantum error correction}, Cambridge University Press, Cambridge
(2013).

\bibitem{Lindblad}
G. Lindblad. On the generators of quantum dynamical semigroups. \emph{Commun. Math. Phys.} 48, 119 (1976).

\bibitem{Malabarba2014} A. S. L. Malabarba, A. J. Short, P. Kammerlander. Clock-driven quantum thermal engines. \emph{New J. Phys.} 17, 045027, (2015).

\bibitem{NDGD} M. A. Nielsen, M. R. Dowling, M. Gu, and A. C. Doherty. 
Optimal control, geometry, and quantum computing. \textit{Phys. Rev. A} 73, 062323 (2006).
%
\bibitem{Nurdin}H. I. Nurdin and N. Yamamoto. \emph{Linear Dynamical Quantum Systems: Analysis, Synthesis, and Control}. Springer (2017).

\bibitem{Paulsen2003}
V. Paulsen. \emph{Completely Bounded Maps and Operator Algebras}. Cambridge University Press (2003).

\bibitem{Pollock} F. A. Pollock, C. Rodriguez-Rosario, T. Frauenheim, M. Paternostro, and K. Modi. Non-Markovian quantum processes: Complete framework and efficient characterization. \emph{Phys. Rev. A} 97,  012127 (2018).

\bibitem{Reed2} M. Reed and B. Simon.
\textit{Methods of mathematical physics 2. Fourier analysis, self-adjointness}. Academic Press (1975).

\bibitem{Rivas}A. Rivas and S. F. Huelga. \emph{Open Quantum Systems: An Introduction}. Springer (2011).
\bibitem{SS} O.M. Shalit and M. Skeide. CP-semigroups and dilations,
subproduct systems and superproduct systems: The multi-parameter case and beyond. arxiv:2003.05166.

\bibitem{Stinespring55}
W. F. Stinespring. Positive Functions on C*-algebras.\emph{Proc. Amer. Math. Soc.} 6, 211 (1955).

\bibitem{Tak} M. Takesaki. \emph{Theory of operator algebras. Vol. 1}, Springer (2001)

\bibitem{VK03} L. Viola and E. Knill. Robust dynamical decoupling of quantum systems with bounded controls. \emph{Phys. Rev. Lett.} 90, 037901 (2003).
%
\bibitem{VL98} L. Viola and S. Lloyd. Dynamical suppression of decoherence in two-state quantum systems. \emph{Phys. Rev. A} 58, 2733 (1998).
%
\bibitem{Wal} W. von Waldenfels. \textit{A measure theoretical approach to quantum stochastic processes}. Springer, Berlin (2014).
%
\bibitem{Weiss} U. Weiss. \emph{Quantum Dissipative Systems}. World Scientific (2012).

\bibitem{White} G. A. L. White, C. D. Hill, F. A. Pollock, L. C. L. Hollenberg, and K. Modi. Experimental non-Markovian process characterisation and control on a quantum processor. arXiv:2004.14018 [quant-ph].

\bibitem{Wiseman} H. M. Wiseman and G. Milburn. \emph{Quantum Measurement and Control}. Cambridge University Press (2009).

\end{thebibliography}
\end{document}